\def\boxit#1{\vbox{\hrule\hbox{\vrule\kern4pt
  \vbox{\kern1pt#1\kern1pt}
\kern2pt\vrule}\hrule}}
\newtheorem{definition}{Definition}
\newtheorem{lemma}{Lemma}
\newtheorem{theorem}{Theorem}
\begin{document}

\title{An Improved FPT Algorithm for the Flip Distance Problem%
\thanks{A preliminary version of this paper appeared in the proceedings of 42nd International Symposium on Mathematical Foundations of Computer Science, MFCS 2017.}
\thanks{This work is supported by the National Natural Science Foundation of China under Grants (61672536, 61420106009, 61872450, 61828205,), Hunan Provincial Science and Technology Program (2018WK4001) and the European Research Council (ERC) under the European Union's Horizon 2020 research and the innovation programme (grant Nr: 714704).}}

\author{
  Qilong Feng\thanks{School of Computer Science and Engineering, Central South University, Changsha, Hunan, P.R.China.}
  \and
  Shaohua Li\thanks{Institute of Informatics, University of Warsaw, Poland.}
  \and
  Xiangzhong Meng\thanks{School of Computer Science and Engineering, Central South University, Changsha, Hunan, P.R.China.}
  \and
  Jianxin Wang\thanks{School of Computer Science and Engineering, Central South University, Changsha, Hunan, P.R.China, \texttt{jxwang@mail.csu.edu.cn}}}
  \date{}

\maketitle

\begin{textblock}{20}(0, 12.0)
\includegraphics[width=40px]{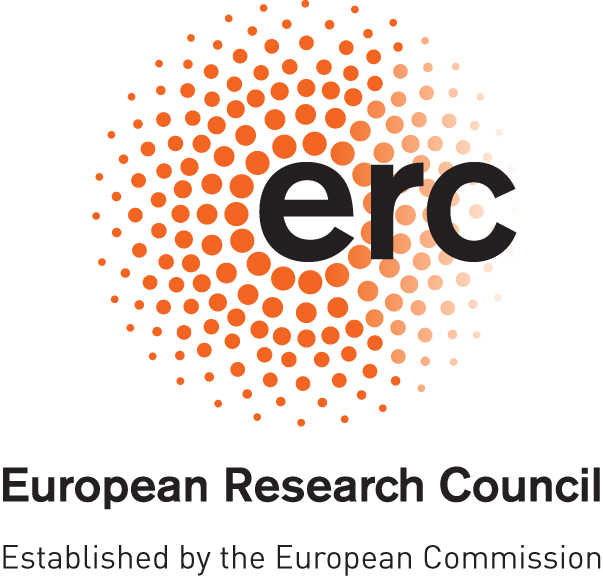}%
\end{textblock}
\begin{textblock}{20}(-0.25, 12.4)
\includegraphics[width=60px]{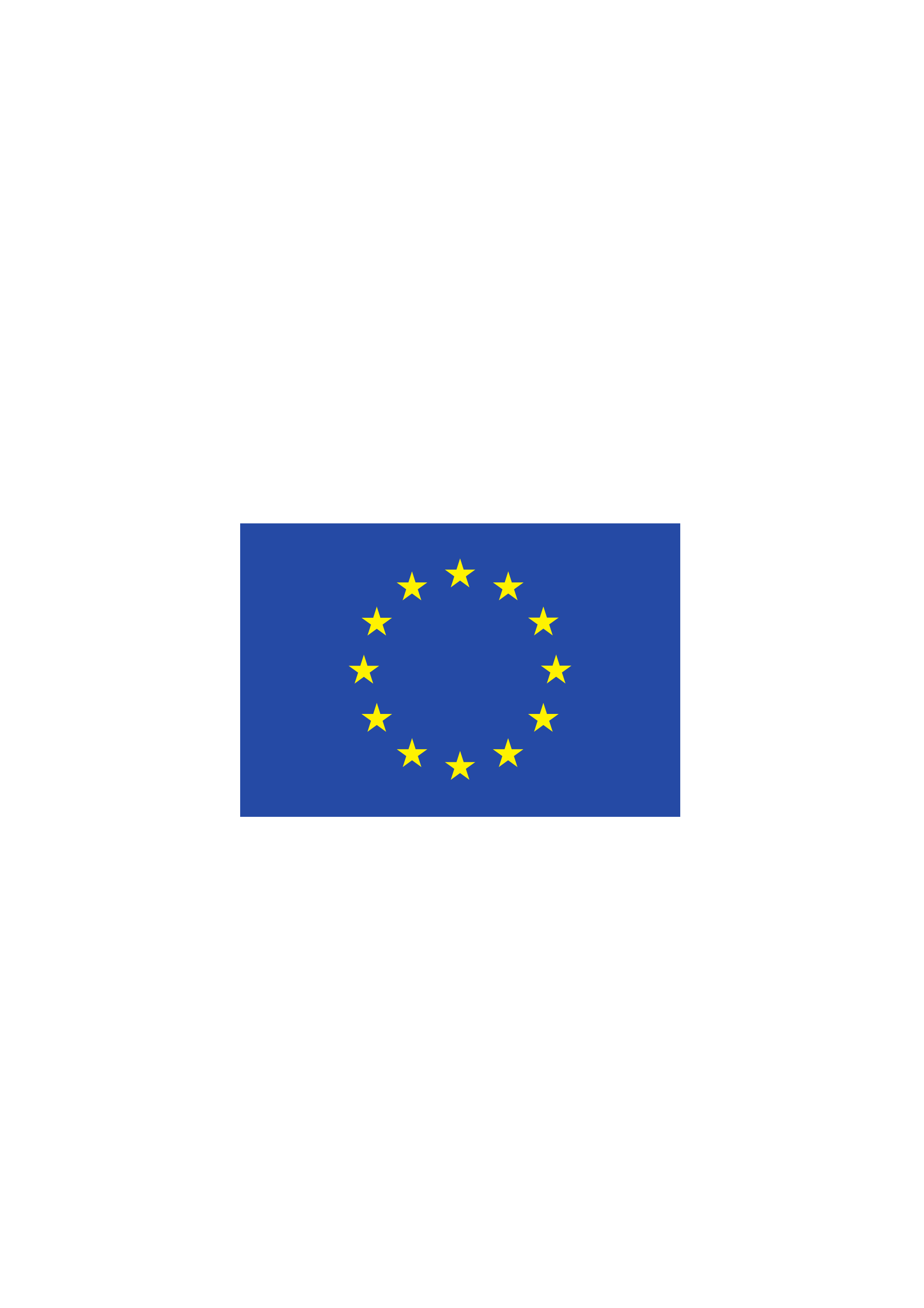}%
\end{textblock}

\begin{abstract}
Given a set $\cal P$ of points in the Euclidean plane and two triangulations of $\cal P$, the flip distance between these two triangulations is the minimum number of flips required to transform one triangulation into the other. \textsc{Parameterized Flip Distance} problem is to decide if the flip distance between two given triangulations is equal to a given integer $k$. The previous best FPT algorithm runs in time $O^{*}(k\cdot c^{k})$ ($c\leq 2\times 14^{11}$), where each step has fourteen possible choices, and the length of the action sequence is bounded by $11k$. By applying the backtracking strategy and analyzing the underlying property of the flip sequence, each step of our algorithm has only five possible choices. Based on an auxiliary graph $G$, we prove that the length of the action sequence for our algorithm is bounded by $2|G|$. As a result, we present an FPT algorithm running in time $O^{*}(k\cdot 32^{k})$.
\end{abstract}

\section{Introduction}\label{Introduction}
Given a set $\cal P$ of $n$ points in the Euclidean plane, a \emph{triangulation} of $\cal P$ is a maximal planar subdivision whose vertex set is $\cal P$ \cite{Mark}. A \emph{flip} operation to one diagonal $e$ of a convex quadrilateral in a triangulation is to remove $e$ and insert the other diagonal into this quadrilateral. Note that if the quadrilateral associated with $e$ is not convex, the flip operation is not allowed.  The \emph{flip distance} between two triangulations is the minimum number of flips required to transform one triangulation into the other.

Triangulations play an important role in computational geometry, which are applied in areas such as computer-aided geometric design and numerical analysis \cite{farin2014curves,Hamann,schumaker1993triangulations}.

Given a point set $\cal P$ in the Euclidean plane, we can construct a graph $G_{T}(\cal P)$ in which every triangulation of $\cal P$ is represented by a vertex, and two vertices are adjacent if their corresponding triangulations can be transformed into each other through one flip operation.  $G_{T}(\cal P)$ is called the \emph{triangulations graph} of $\cal P$.  Properties of the triangulations graph are studied in the literature.  Aichholzer et al. \cite{Aichholzer} showed that the lower bound of the number of vertices of $G_{T}(\cal P)$ is $\Omega (2.33^{n})$.  Lawson and Charles \cite{Lawson} showed that the diameter of $G_{T}(\cal P)$ is $O(n^{2})$. Hurtado et al. \cite{Hurtado} proved that the bound is tight. Since $G_{T}(\cal P)$ is connected \cite{Lawson}, any two triangulations of $\cal P$ can be transformed into each other through a certain number of flips.

\textsc{Flip Distance} problem consists in computing the flip distance between two triangulations of $\cal P$, which was proved to be NP-complete by Lubiw and Pathak \cite{Lubiw}.  Pilz showed that the \textsc{Flip Distance} is APX-hard \cite{Pilz}. Aichholzer et al. \cite{Aichholzer2} proved that \textsc{Parameterized Flip Distance} is NP-complete on triangulations of simple polygons.  However, the complexity of \textsc{Flip Distance} on triangulations of convex polygons has been open for many years, which is equivalent to the problem of computing the rotation distance between two rooted binary trees \cite{Sleator}.

\textsc{Parameterized Flip Distance} problem is: given two triangulations of a set of points in the plane and an integer $k$, deciding if the flip distance between these two triangulations is equal to $k$.  For \textsc{Parameterized Flip Distance} on triangulations of a convex polygon, Lucas \cite{Lucas} gave a kernel of size $2k$ and an $O^{*}(k^{k})$-time algorithm.  Kanj and Xia \cite{Kanj} studied \textsc{Parameterized Flip Distance} on triangulations of a set of points in the plane, and presented an $O^{*}(k\cdot c^{k})$-time algorithm ($c\leq 2\cdot 14^{11}$), which applies to triangulations of general polygonal regions (even with holes or points inside it).

In this paper, we exploit \textsc{Parameterized Flip Distance} further. At first, we give a nondeterministic construction process to illustrate our idea. The nondeterministic construction process contains only two types of actions, which are the \emph{moving action} as well as the \emph{flipping and backing action}.  Given two triangulations and a parameter $k$, we prove that either there exists a sequence of actions of length at most $2k$, following which we can transform one triangulation into the other, or we can conclude that no valid sequence of length $k$ exists.  Thus we get an improved $O^{*}(k\cdot 32^{k})$-time FPT algorithm, which also applies to triangulations of general polygonal regions (even with holes or points inside it).

\section{Preliminaries}
\label{Preliminaries}
In a triangulation $T$, a flip operation $f$ to an edge $e$ that is the diagonal of a convex quadrilateral $\cal Q$ is to delete $e$ and insert the other diagonal $e'$ into $\cal Q$.    We define $e$ as the \emph{underlying edge} of $f$, denoted by $\varepsilon (f)$, and $e'$ as the \emph{resulting edge} of $f$, denoted by $\varphi (f)$. (For consistency and clarity, we continue to use some symbols and definitions from \cite{Kanj}). Note that if $e$ is not a diagonal of any convex quadrilateral in the triangulation, flipping $e$ is not allowed.  Suppose that we perform a flip operation $f$ on a triangulation $T_{1}$ and get a new triangulation $T_{2}$.  We say $f$ transforms $T_{1}$ into $T_{2}$.  $T_{1}$ is called an \emph{underlying triangulation} of $f$, and $T_{2}$ is called a \emph{resulting triangulation} of $f$. Given a set $\cal P$ of $n$ points in the Euclidean plane, let $T_{start}$ and $T_{end}$ be two triangulations of $\cal P$, in which $T_{start}$ is the initial triangulation and $T_{end}$ is the objective triangulation. Let $F = \langle f_{1}, f_{2},...,f_{r}\rangle$ be a sequence of flips, and $\langle T_{0}, T_{1},...,T_{r}\rangle$ be a sequence of triangulations of $\cal P$ in which $T_{0} = T_{start}$ and $T_{r} = T_{end}$.  If $T_{i-1}$ is an underlying triangulation of $f_{i}$, and $T_{i}$ is a resulting triangulation of $f_{i}$ for each $i = 1,2,...,r$, we say $F$ transforms $T_{start}$ into $T_{end}$, or $F$ is a \emph{valid sequence}, denoted by $T_{start}\xrightarrow{F} T_{end}$.  The flip distance between $T_{start}$ and $T_{end}$ is the length of a shortest valid flip sequence.

Now we give the formal definition of \textsc{Parameterized Flip Distance} problem.

\begin{quote}
\textsc{Parameterized Flip Distance}\\
\textbf{Input:} Two triangulations $T_{start}$ and $T_{end}$ of $\cal P$ and an integer k.  \\
\textbf{Question:} Decide if the flip distance between $T_{start}$ and $T_{end}$ is equal to $k$.
\end{quote}

The triangulation on which we are performing a flip operation is called the \emph{current triangulation}.  An edge $e$ which belongs to the current triangulation but does not belong to $T_{end}$ is called a \emph{necessary edge} in the current triangulation.  It is easy to see that for any necessary edge $e$, there must exist a flip operation $f$ in a valid sequence such that $e$ = $\varepsilon (f)$.  Otherwise, we cannot get the objective triangulation $T_{end}$.

For a directed graph $D$, a maximal connected component of its underlying graph is called a \emph{weakly connected component} of $D$. We define the size of an undirected tree as the number of its vertices. A node in $D$ is called a \emph{source node} if the indegree of this node is $0$.

A \emph{parameterized problem} is a decision problem for which every instance is of the form $(x,k)$, where $x$ is the input instance and $k\in \mathbb{N}$ is the parameter.  A parameterized problem is \emph{fixed-parameter tractable} (\emph{FPT}) if it can be solved by an algorithm (\emph{FPT algorithm}) in $O(f(k)|x|^{O(1)})$ time, where $f(k)$ is a computable function of $k$. For a further introduction to parameterized algorithms, readers could refer to \cite{ChenSurvey,Cygan}.
\section{The Improved Algorithm for \textsc{Parameterized Flip Distance}}
\label{algorithm}
Given $T_{start}$ and $T_{end}$, let $F = \langle f_{1}, f_{2},...,f_{r}\rangle$ be a valid sequence, that is, $T_{start}\xrightarrow{F} T_{end}$. Definition \ref{def1} defines the adjacency of two flips in $F$.

\begin{definition} \label{def1}  \cite{Kanj}
Let $f_{i}$ and $f_{j}$ be two flips in $F$ ($1\leq i<j\leq r)$.  We define that flip $f_{j}$ is adjacent to flip $f_{i}$, denoted by $f_{i}\rightarrow f_{j}$, if the following two conditions are satisfied: \\
(1) either $\varphi (f_{i}) = \varepsilon (f_{j})$, or $ \varphi (f_{i})$ and $\varepsilon (f_{j})$ share a triangle in triangulation $T_{j-1}$; \\
(2) $\varphi (f_{i})$ is not flipped between $f_{i}$ and $f_{j}$, that is, there does not exist a flip $f_{p}$ in $F$, where \\
\hspace*{4mm} $i < p < j$, such that $\varphi (f_{i}) = \varepsilon (f_{p})$.
\end{definition}

By Definition \ref{def1}, we can construct a directed acyclic graph (DAG), denoted by $D_{F}$. Every node in $D_{F}$ represents a flip operation of $F$, and there is an arc from $f_{i}$ to $f_{j}$ if $f_{j}$ is adjacent to $f_{i}$.  For convenience, we label the nodes in $D_{F}$ using labels of the corresponding flip operations. In other words, we can see a node in $D_{F}$ as a flip operation and vice versa.

The intuition of Definition \ref{def1} is that if there is an arc from $f_{i}$ to $f_{j}$, then $f_j$ cannot be flipped before $f_i$ because the quadrilateral corresponding to the flip $f_j$ is formed after $f_i$ or the underlying edge of $f_j$, namely $\varepsilon (f_j)$ is the resulting edge of $f_i$, namely $\varphi (f_i)$.
The following lemma gives a stronger statement: any topological sorting of $D_{F}$ is a valid sequence.
\begin{lemma} \label{topo}  \cite{Kanj}
Let $T_{0}$ and $T_{r}$ be two triangulations and $F = \langle f_{1}, f_{2},...,f_{r}\rangle$ be a sequence of flips such that $T_{0}\xrightarrow{F} T_{r}$.  Let $\pi (F)$ be a permutation of the flips in $F$ such that $\pi (F)$ is a topological sorting of $D_{F}$.  Then $\pi (F)$ is a valid sequence of flips such that $T_{0}\xrightarrow{\pi (F)} T_{r}$.
\end{lemma}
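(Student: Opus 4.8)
The plan is to show that a single \emph{legal swap} of two consecutive flips preserves both validity and the final triangulation, and then to realize the given permutation $\pi(F)$ as a composition of such swaps. First I would observe that the original sequence $F$ is itself a topological sorting of $D_{F}$: by Definition~\ref{def1} every arc $f_{i}\to f_{j}$ satisfies $i<j$, so the identity order respects all arcs. A standard fact about linear extensions of a DAG is that any two of them are connected by a sequence of adjacent transpositions, each swapping two \emph{consecutive} elements that are incomparable (joined by no arc, hence --- since they are adjacent --- by no directed path). Thus it suffices to prove the following commutation statement: if $\langle\dots,g,g',\dots\rangle$ is a valid sequence in which $g,g'$ are consecutive and $D_{F}$ contains no arc $g\to g'$, then interchanging $g$ and $g'$ again yields a valid sequence with the same final triangulation. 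Applying this repeatedly transforms $F$ into $\pi(F)$ while preserving $T_{0}\xrightarrow{\ } T_{r}$.

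The heart of the argument --- and the step I would develop most carefully --- is the geometric commutation lemma. Since $g$ and $g'$ are consecutive, condition~(2) of Definition~\ref{def1} holds vacuously, so ``no arc $g\to g'$'' forces condition~(1) to fail: $\varphi(g)\neq\varepsilon(g')$, and $\varphi(g)$ and $\varepsilon(g')$ share no triangle in the triangulation $T'$ obtained right after $g$. Let $Q$ be the convex quadrilateral flipped by $g$ (diagonal $\varepsilon(g)$ before, $\varphi(g)$ after) and $R$ the convex quadrilateral flipped by $g'$ (diagonal $\varepsilon(g')$). Because the two triangles of $R$ in $T'$ contain neither $\varphi(g)$ (by the failed condition) nor $\varepsilon(g)$ (which is absent from $T'$), they are among the triangles left unchanged by $g$; hence they already appear in the triangulation preceding $g$, so $g'$ is applicable there. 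Those same two triangles cannot belong to $Q$, whose triangles are destroyed by $g$, so $R$ and $Q$ are triangle-disjoint, and $g$ therefore remains applicable after $g'$. Finally, a triangulation is determined by its edge set, and both orders produce the edge set obtained from the common starting one by deleting $\varepsilon(g),\varepsilon(g')$ and inserting $\varphi(g),\varphi(g')$; so the two orders reach the same triangulation.

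The main obstacle is the bookkeeping needed to apply the commutation lemma uniformly against the single fixed DAG $D_{F}$. Condition~(1) is phrased relative to a specific intermediate triangulation, whereas reordering changes those intermediate triangulations. To handle this I would maintain, as an invariant along the chain of swaps, that every flip keeps the same underlying edge $\varepsilon(\cdot)$ and resulting edge $\varphi(\cdot)$ --- equivalently, the same quadrilateral --- that it had in $F$; the commutation lemma is exactly what preserves this, since $Q$ and $R$ are unchanged by the swap. Granting the invariant, an adjacent inverted pair $f_{b},f_{a}$ (with $a<b$ but $f_{b}$ preceding $f_{a}$ in the current order) is incomparable in $D_{F}$, and the delicate point is to convert this incomparability, together with adjacency, into the geometric precondition (failure of condition~(1)) for the \emph{current} consecutive pair. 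Verifying this transfer cleanly is where the real work lies; once it is in place, an induction on the number of inversions of $\pi(F)$ relative to $F$ completes the proof.
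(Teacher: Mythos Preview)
The paper does not give its own proof of this lemma: it is quoted from Kanj and Xia~\cite{Kanj} and used as a black box. So there is no in-paper argument to compare against; I can only assess your sketch on its own merits.

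Your overall plan---reach $\pi(F)$ from $F$ by adjacent transpositions of incomparable pairs, justified by a local commutation lemma---is the standard route and is almost certainly what~\cite{Kanj} does. The geometric heart of your middle paragraph is correct: if two consecutive flips $g,g'$ in a valid sequence satisfy $\varphi(g)\neq\varepsilon(g')$ and share no triangle in the triangulation right after $g$, then their quadrilaterals $Q,R$ are triangle-disjoint, the swap is valid, and the edge sets agree.

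Where the argument slips is in deducing that geometric hypothesis from ``$D_{F}$ contains no arc $g\to g'$.'' You write that condition~(2) of Definition~\ref{def1} holds vacuously because $g,g'$ are consecutive, hence condition~(1) must fail. But Definition~\ref{def1} is stated relative to the \emph{original} sequence $F$, and there $g$ and $g'$ need not be consecutive; indeed, if $g=f_{b}$ and $g'=f_{a}$ with $a<b$, the arc $g\to g'$ does not even exist by definition, so your hypothesis is vacuous and yields nothing. What your middle paragraph actually proves is the commutation lemma for the DAG of the \emph{current} sequence, not for $D_{F}$. You recognize this in the final paragraph and name the ``transfer'' from incomparability in $D_{F}$ to failure of condition~(1) in the current sequence as ``where the real work lies''---but you do not carry it out, and it does not follow from what precedes.

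That transfer is genuinely the nontrivial step. One way to close it is to strengthen the inductive invariant: show that every intermediate ordering produced by your swaps not only preserves each flip's quadrilateral, but also yields a DAG (via Definition~\ref{def1}) whose arcs are contained in the transitive closure of $D_{F}$. Then ``incomparable in $D_{F}$'' forces ``no arc in the current DAG,'' and your geometric commutation applies. Until something of that sort is supplied, what you have is a correct outline with the key lemma deferred.
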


Lemma \ref{topo} ensures that if we repeatedly remove a source node from $D_{F}$ and flip the underlying edge of this node until $D_{F}$ becomes empty, we can get a valid sequence and the objective triangulation $T_{end}$. On the basis of Lemma \ref{topo}, the essential task of our algorithm is to find an edge which is the underlying edge of a source node. Thus we introduce the definition of a \emph{walk}, which describes the "track" to find such an edge.

\begin{definition} \label{def2}  \cite{Kanj2}
A \emph{walk} in a triangulation $T$ (starting from an edge $e\in T$) is a sequence of edges of $T$ beginning with $e$ in which any two consecutive edges share a triangle in $T$.
\end{definition}

According to Lemma~\ref{topo}, if there is a valid sequence $F$ for the input instance, any topological sorting of $D_{F}$ is also a valid sequence for the given instance.  The difficulty is that $F$ is unknown. In order to find the topological sorting of $D_{F}$, the algorithm of Kanj and Xia \cite{Kanj} takes a nondeterministic walk to find an edge $e$ which is the underlying edge of a source node, flips this edge (removing the corresponding node from $D_{F}$), nondeterministically walks to an edge which shares a triangle with $e$ and recursively searches for an edge corresponding to a source node. Their algorithm deals with weakly connected components of $D_{F}$ one after another (refer to Corollary 4 in \cite{Kanj}), that is, the algorithm tries to find a solution $F$ in which all flips belonging to the same weakly connected component of $D_{F}$ appear consecutively.  In order to keep this procedure within the current weakly connected component, the algorithm uses a stack to preserve the nodes (defined as \emph{connecting point} in \cite{Kanj}) whose removal separates the current weakly connected component into small weakly connected components.  When removing all nodes of a small component, their algorithm jumps to the connecting point at the top of the stack in order to find another small component.

We observe that it is not necessary to remove all nodes of a weakly connected component before dealing with other weakly connected components, that is, our algorithm may find a solution $F$ in which the nodes belonging to the same weakly connected components appear dispersedly.  Thus our algorithm leaves out the stack which is used to preserve connecting points.  We show that it suffices to use two types of actions (Section~\ref{action}) instead of the five types in \cite{Kanj}.  Moreover, every time our algorithm finds a source node, it removes the node, flips the underlying edge and backtracks to the previous edge in the walk instead of searching for the next node, thus reducing the number of choices for the actions. In a word, our algorithm traverses $D_{F}$ in a reverse way. However, two adjacent edges in a walk may not correspond to two adjacent vertices in $D_{F}$. In order to emphasize this fact and make it more convenient for the proof, we construct an auxiliary graph $G$ and prove that $G$ is a forest. Actually $G$ can be seen as the track of the reverse traversal of $D_F$. However, $G$ is not the underlying graph of $D_{F}$ (Fig.~\ref{fig2}). Since there is a bijection between nondeterministic actions and nodes as well as edges of $G$, we prove that there exists a sequence of actions of length at most $2|D_{F}|$,  which is smaller than $11|D_{F}|$ in \cite{Kanj}.  In addition, we make some optimization on the strategy of finding the objective sequence.  As a result, we improve the running time of the algorithm from $O^{*}(k\cdot c^{k})$ where $c\leq 2\cdot 14^{11}$~\cite{Kanj} to $O^{*}(k\cdot 32^{k})$.

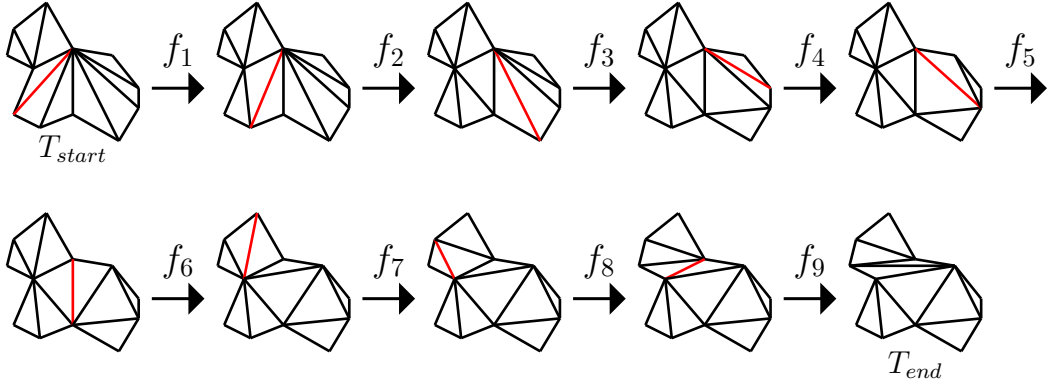
\begin{figure}[htbp]
\begin{center}
\begin{tikzpicture}[scale=0.35]

\begin{scope}[shift={(0cm,0cm)}]
		\node  (0) at (-4, 1) {};
		\node  (1) at (-5, 2.75) {};
		\node (2) at (-6.25, 1.75) {};
		\node (3) at (-6.5, 0.75) {};
		\node  (4) at (-5.5, 0.25) {};
		\node  (5) at (-6.25, -1.5) {};
		\node (6) at (-5.25, -2) {};
		\node (7) at (-4, -1.5) {};
		\node  (8) at (-2.25, -2.5) {};
		\node (9) at (-1.5, -1.25) {};
		\node  (10) at (-1.5, -0.5) {};
		\node (11) at (-2.5, 0.75) {};
        \node (12) at (-4, -2.8) {\large $T_{start}$};
		\node (13) at (-1, -0.5) {};
		\node (14) at (1, -0.5) {};

		\draw[line width=1] (1.center) to (0.center);
		\draw[line width=1] (1.center) to (2.center);
		\draw[line width=1] (2.center) to (3.center);
		\draw[line width=1] (3.center) to (4.center);
		\draw[line width=1] (4.center) to (5.center);
		\draw[line width=1] (5.center) to (6.center);
		\draw[line width=1] (6.center) to (7.center);
		\draw[line width=1] (7.center) to (8.center);
		\draw[line width=1] (8.center) to (9.center);
		\draw[line width=1] (9.center) to (10.center);
		\draw[line width=1] (10.center) to (11.center);
		\draw[line width=1] (11.center) to (0.center);
		\draw[line width=1] (2.center) to (4.center);
		\draw[line width=1] (1.center) to (4.center);
		\draw[line width=1] (4.center) to (0.center);
		\draw[line width=1,color=red] (0.center) to (5.center);
		\draw[line width=1] (0.center) to (6.center);
		\draw[line width=1] (0.center) to (7.center);
		\draw[line width=1] (0.center) to (8.center);
		\draw[line width=1] (0.center) to (9.center);
		\draw[line width=1] (0.center) to (10.center);
        \tikzset{>=triangle 90}
        \draw [->,line width=0.4mm] (13.center)--  node[xshift=0mm, yshift=5mm] {\Large $f_{1}$} (14.center);
\end{scope}

\begin{scope}[shift={(8cm,0cm)}]
		\node  (0) at (-4, 1) {};
		\node  (1) at (-5, 2.75) {};
		\node (2) at (-6.25, 1.75) {};
		\node (3) at (-6.5, 0.75) {};
		\node  (4) at (-5.5, 0.25) {};
		\node  (5) at (-6.25, -1.5) {};
		\node (6) at (-5.25, -2) {};
		\node (7) at (-4, -1.5) {};
		\node  (8) at (-2.25, -2.5) {};
		\node (9) at (-1.5, -1.25) {};
		\node  (10) at (-1.5, -0.5) {};
		\node (11) at (-2.5, 0.75) {};
		\node (13) at (-1, -0.5) {};
		\node (14) at (1, -0.5) {};

		\draw[line width=1] (1.center) to (0.center);
		\draw[line width=1] (1.center) to (2.center);
		\draw[line width=1] (2.center) to (3.center);
		\draw[line width=1] (3.center) to (4.center);
		\draw[line width=1] (4.center) to (5.center);
		\draw[line width=1] (5.center) to (6.center);
		\draw[line width=1] (6.center) to (7.center);
		\draw[line width=1] (7.center) to (8.center);
		\draw[line width=1] (8.center) to (9.center);
		\draw[line width=1] (9.center) to (10.center);
		\draw[line width=1] (10.center) to (11.center);
		\draw[line width=1] (11.center) to (0.center);
		\draw[line width=1] (2.center) to (4.center);
		\draw[line width=1] (1.center) to (4.center);
		\draw[line width=1] (4.center) to (0.center);
		\draw[line width=1] (4.center) to (6.center);
		\draw[line width=1,color=red] (0.center) to (6.center);
		\draw[line width=1] (0.center) to (7.center);
		\draw[line width=1] (0.center) to (8.center);
		\draw[line width=1] (0.center) to (9.center);
		\draw[line width=1] (0.center) to (10.center);
        \tikzset{>=triangle 90}
        \draw [->,line width=0.4mm] (13.center)--  node[xshift=0mm, yshift=5mm] {\Large $f_{2}$} (14.center);
\end{scope}

\begin{scope}[shift={(16cm,0cm)}]
		\node  (0) at (-4, 1) {};
		\node  (1) at (-5, 2.75) {};
		\node (2) at (-6.25, 1.75) {};
		\node (3) at (-6.5, 0.75) {};
		\node  (4) at (-5.5, 0.25) {};
		\node  (5) at (-6.25, -1.5) {};
		\node (6) at (-5.25, -2) {};
		\node (7) at (-4, -1.5) {};
		\node  (8) at (-2.25, -2.5) {};
		\node (9) at (-1.5, -1.25) {};
		\node  (10) at (-1.5, -0.5) {};
		\node (11) at (-2.5, 0.75) {};
		\node (13) at (-1, -0.5) {};
		\node (14) at (1, -0.5) {};

		\draw[line width=1] (1.center) to (0.center);
		\draw[line width=1] (1.center) to (2.center);
		\draw[line width=1] (2.center) to (3.center);
		\draw[line width=1] (3.center) to (4.center);
		\draw[line width=1] (4.center) to (5.center);
		\draw[line width=1] (5.center) to (6.center);
		\draw[line width=1] (6.center) to (7.center);
		\draw[line width=1] (7.center) to (8.center);
		\draw[line width=1] (8.center) to (9.center);
		\draw[line width=1] (9.center) to (10.center);
		\draw[line width=1] (10.center) to (11.center);
		\draw[line width=1] (11.center) to (0.center);
		\draw[line width=1] (2.center) to (4.center);
		\draw[line width=1] (1.center) to (4.center);
		\draw[line width=1] (4.center) to (0.center);
		\draw[line width=1] (4.center) to (6.center);
		\draw[line width=1] (4.center) to (7.center);
		\draw[line width=1] (0.center) to (7.center);
		\draw[line width=1,color=red] (0.center) to (8.center);
		\draw[line width=1] (0.center) to (9.center);
		\draw[line width=1] (0.center) to (10.center);
        \tikzset{>=triangle 90}
        \draw [->,line width=0.4mm] (13.center)--  node[xshift=0mm, yshift=5mm] {\Large $f_{3}$} (14.center);
\end{scope}

\begin{scope}[shift={(24cm,0cm)}]
		\node  (0) at (-4, 1) {};
		\node  (1) at (-5, 2.75) {};
		\node (2) at (-6.25, 1.75) {};
		\node (3) at (-6.5, 0.75) {};
		\node  (4) at (-5.5, 0.25) {};
		\node  (5) at (-6.25, -1.5) {};
		\node (6) at (-5.25, -2) {};
		\node (7) at (-4, -1.5) {};
		\node  (8) at (-2.25, -2.5) {};
		\node (9) at (-1.5, -1.25) {};
		\node  (10) at (-1.5, -0.5) {};
		\node (11) at (-2.5, 0.75) {};
		\node (13) at (-1, -0.5) {};
		\node (14) at (1, -0.5) {};

		\draw[line width=1] (1.center) to (0.center);
		\draw[line width=1] (1.center) to (2.center);
		\draw[line width=1] (2.center) to (3.center);
		\draw[line width=1] (3.center) to (4.center);
		\draw[line width=1] (4.center) to (5.center);
		\draw[line width=1] (5.center) to (6.center);
		\draw[line width=1] (6.center) to (7.center);
		\draw[line width=1] (7.center) to (8.center);
		\draw[line width=1] (8.center) to (9.center);
		\draw[line width=1] (9.center) to (10.center);
		\draw[line width=1] (10.center) to (11.center);
		\draw[line width=1] (11.center) to (0.center);
		\draw[line width=1] (2.center) to (4.center);
		\draw[line width=1] (1.center) to (4.center);
		\draw[line width=1] (4.center) to (0.center);
		\draw[line width=1] (4.center) to (6.center);
		\draw[line width=1] (4.center) to (7.center);
		\draw[line width=1] (0.center) to (7.center);
		\draw[line width=1] (7.center) to (9.center);
		\draw[line width=1] (0.center) to (9.center);
		\draw[line width=1,color=red] (0.center) to (10.center);
        \tikzset{>=triangle 90}
        \draw [->,line width=0.4mm] (13.center)--  node[xshift=0mm, yshift=5mm] {\Large $f_{4}$} (14.center);
\end{scope}

\begin{scope}[shift={(32cm,0cm)}]
		\node  (0) at (-4, 1) {};
		\node  (1) at (-5, 2.75) {};
		\node (2) at (-6.25, 1.75) {};
		\node (3) at (-6.5, 0.75) {};
		\node  (4) at (-5.5, 0.25) {};
		\node  (5) at (-6.25, -1.5) {};
		\node (6) at (-5.25, -2) {};
		\node (7) at (-4, -1.5) {};
		\node  (8) at (-2.25, -2.5) {};
		\node (9) at (-1.5, -1.25) {};
		\node  (10) at (-1.5, -0.5) {};
		\node (11) at (-2.5, 0.75) {};
		\node (13) at (-1, -0.5) {};
		\node (14) at (1, -0.5) {};

		\draw[line width=1] (1.center) to (0.center);
		\draw[line width=1] (1.center) to (2.center);
		\draw[line width=1] (2.center) to (3.center);
		\draw[line width=1] (3.center) to (4.center);
		\draw[line width=1] (4.center) to (5.center);
		\draw[line width=1] (5.center) to (6.center);
		\draw[line width=1] (6.center) to (7.center);
		\draw[line width=1] (7.center) to (8.center);
		\draw[line width=1] (8.center) to (9.center);
		\draw[line width=1] (9.center) to (10.center);
		\draw[line width=1] (10.center) to (11.center);
		\draw[line width=1] (11.center) to (0.center);
		\draw[line width=1] (2.center) to (4.center);
		\draw[line width=1] (1.center) to (4.center);
		\draw[line width=1] (4.center) to (0.center);
		\draw[line width=1] (4.center) to (6.center);
		\draw[line width=1] (4.center) to (7.center);
		\draw[line width=1] (0.center) to (7.center);
		\draw[line width=1] (7.center) to (9.center);
		\draw[line width=1,color=red] (0.center) to (9.center);
		\draw[line width=1] (9.center) to (11.center);
        \tikzset{>=triangle 90}
        \draw [->,line width=0.4mm] (13.center)--  node[xshift=0mm, yshift=5mm] {\Large $f_{5}$} (14.center);
\end{scope}

\begin{scope}[shift={(0,-8cm)}]
		\node  (0) at (-4, 1) {};
		\node  (1) at (-5, 2.75) {};
		\node (2) at (-6.25, 1.75) {};
		\node (3) at (-6.5, 0.75) {};
		\node  (4) at (-5.5, 0.25) {};
		\node  (5) at (-6.25, -1.5) {};
		\node (6) at (-5.25, -2) {};
		\node (7) at (-4, -1.5) {};
		\node  (8) at (-2.25, -2.5) {};
		\node (9) at (-1.5, -1.25) {};
		\node  (10) at (-1.5, -0.5) {};
		\node (11) at (-2.5, 0.75) {};
		\node (13) at (-1, -0.5) {};
		\node (14) at (1, -0.5) {};

		\draw[line width=1] (1.center) to (0.center);
		\draw[line width=1] (1.center) to (2.center);
		\draw[line width=1] (2.center) to (3.center);
		\draw[line width=1] (3.center) to (4.center);
		\draw[line width=1] (4.center) to (5.center);
		\draw[line width=1] (5.center) to (6.center);
		\draw[line width=1] (6.center) to (7.center);
		\draw[line width=1] (7.center) to (8.center);
		\draw[line width=1] (8.center) to (9.center);
		\draw[line width=1] (9.center) to (10.center);
		\draw[line width=1] (10.center) to (11.center);
		\draw[line width=1] (11.center) to (0.center);
		\draw[line width=1] (2.center) to (4.center);
		\draw[line width=1] (1.center) to (4.center);
		\draw[line width=1] (4.center) to (0.center);
		\draw[line width=1] (4.center) to (6.center);
		\draw[line width=1] (4.center) to (7.center);
		\draw[line width=1,color=red] (0.center) to (7.center);
		\draw[line width=1] (7.center) to (9.center);
		\draw[line width=1] (7.center) to (11.center);
		\draw[line width=1] (9.center) to (11.center);
        \tikzset{>=triangle 90}
        \draw [->,line width=0.4mm] (13.center)--  node[xshift=0mm, yshift=5mm] {\Large $f_{6}$} (14.center);
\end{scope}

\begin{scope}[shift={(8cm,-8cm)}]
		\node  (0) at (-4, 1) {};
		\node  (1) at (-5, 2.75) {};
		\node (2) at (-6.25, 1.75) {};
		\node (3) at (-6.5, 0.75) {};
		\node  (4) at (-5.5, 0.25) {};
		\node  (5) at (-6.25, -1.5) {};
		\node (6) at (-5.25, -2) {};
		\node (7) at (-4, -1.5) {};
		\node  (8) at (-2.25, -2.5) {};
		\node (9) at (-1.5, -1.25) {};
		\node  (10) at (-1.5, -0.5) {};
		\node (11) at (-2.5, 0.75) {};
		\node (13) at (-1, -0.5) {};
		\node (14) at (1, -0.5) {};

		\draw[line width=1] (1.center) to (0.center);
		\draw[line width=1] (1.center) to (2.center);
		\draw[line width=1] (2.center) to (3.center);
		\draw[line width=1] (3.center) to (4.center);
		\draw[line width=1] (4.center) to (5.center);
		\draw[line width=1] (5.center) to (6.center);
		\draw[line width=1] (6.center) to (7.center);
		\draw[line width=1] (7.center) to (8.center);
		\draw[line width=1] (8.center) to (9.center);
		\draw[line width=1] (9.center) to (10.center);
		\draw[line width=1] (10.center) to (11.center);
		\draw[line width=1] (11.center) to (0.center);
		\draw[line width=1] (2.center) to (4.center);
		\draw[line width=1,color=red] (1.center) to (4.center);
		\draw[line width=1] (4.center) to (0.center);
		\draw[line width=1] (4.center) to (6.center);
		\draw[line width=1] (4.center) to (7.center);
		\draw[line width=1] (4.center) to (11.center);
		\draw[line width=1] (7.center) to (9.center);
		\draw[line width=1] (7.center) to (11.center);
		\draw[line width=1] (9.center) to (11.center);
        \tikzset{>=triangle 90}
        \draw [->,line width=0.4mm] (13.center)--  node[xshift=0mm, yshift=5mm] {\Large $f_{7}$} (14.center);
\end{scope}

\begin{scope}[shift={(16cm,-8cm)}]
		\node  (0) at (-4, 1) {};
		\node  (1) at (-5, 2.75) {};
		\node (2) at (-6.25, 1.75) {};
		\node (3) at (-6.5, 0.75) {};
		\node  (4) at (-5.5, 0.25) {};
		\node  (5) at (-6.25, -1.5) {};
		\node (6) at (-5.25, -2) {};
		\node (7) at (-4, -1.5) {};
		\node  (8) at (-2.25, -2.5) {};
		\node (9) at (-1.5, -1.25) {};
		\node  (10) at (-1.5, -0.5) {};
		\node (11) at (-2.5, 0.75) {};
		\node (13) at (-1, -0.5) {};
		\node (14) at (1, -0.5) {};

		\draw[line width=1] (1.center) to (0.center);
		\draw[line width=1] (1.center) to (2.center);
		\draw[line width=1] (2.center) to (3.center);
		\draw[line width=1] (3.center) to (4.center);
		\draw[line width=1] (4.center) to (5.center);
		\draw[line width=1] (5.center) to (6.center);
		\draw[line width=1] (6.center) to (7.center);
		\draw[line width=1] (7.center) to (8.center);
		\draw[line width=1] (8.center) to (9.center);
		\draw[line width=1] (9.center) to (10.center);
		\draw[line width=1] (10.center) to (11.center);
		\draw[line width=1] (11.center) to (0.center);
		\draw[line width=1,color=red] (2.center) to (4.center);
		\draw[line width=1] (0.center) to (2.center);
		\draw[line width=1] (4.center) to (0.center);
		\draw[line width=1] (4.center) to (6.center);
		\draw[line width=1] (4.center) to (7.center);
		\draw[line width=1] (4.center) to (11.center);
		\draw[line width=1] (7.center) to (9.center);
		\draw[line width=1] (7.center) to (11.center);
		\draw[line width=1] (9.center) to (11.center);
        \tikzset{>=triangle 90}
        \draw [->,line width=0.4mm] (13.center)--  node[xshift=0mm, yshift=5mm] {\Large $f_{8}$} (14.center);
\end{scope}

\begin{scope}[shift={(24cm,-8cm)}]
		\node  (0) at (-4, 1) {};
		\node  (1) at (-5, 2.75) {};
		\node (2) at (-6.25, 1.75) {};
		\node (3) at (-6.5, 0.75) {};
		\node  (4) at (-5.5, 0.25) {};
		\node  (5) at (-6.25, -1.5) {};
		\node (6) at (-5.25, -2) {};
		\node (7) at (-4, -1.5) {};
		\node  (8) at (-2.25, -2.5) {};
		\node (9) at (-1.5, -1.25) {};
		\node  (10) at (-1.5, -0.5) {};
		\node (11) at (-2.5, 0.75) {};
		\node (13) at (-1, -0.5) {};
		\node (14) at (1, -0.5) {};

		\draw[line width=1] (1.center) to (0.center);
		\draw[line width=1] (1.center) to (2.center);
		\draw[line width=1] (2.center) to (3.center);
		\draw[line width=1] (3.center) to (4.center);
		\draw[line width=1] (4.center) to (5.center);
		\draw[line width=1] (5.center) to (6.center);
		\draw[line width=1] (6.center) to (7.center);
		\draw[line width=1] (7.center) to (8.center);
		\draw[line width=1] (8.center) to (9.center);
		\draw[line width=1] (9.center) to (10.center);
		\draw[line width=1] (10.center) to (11.center);
		\draw[line width=1] (11.center) to (0.center);
		\draw[line width=1] (0.center) to (3.center);
		\draw[line width=1] (0.center) to (2.center);
		\draw[line width=1,color=red] (4.center) to (0.center);
		\draw[line width=1] (4.center) to (6.center);
		\draw[line width=1] (4.center) to (7.center);
		\draw[line width=1] (4.center) to (11.center);
		\draw[line width=1] (7.center) to (9.center);
		\draw[line width=1] (7.center) to (11.center);
		\draw[line width=1] (9.center) to (11.center);
        \tikzset{>=triangle 90}
        \draw [->,line width=0.4mm] (13.center)--  node[xshift=0mm, yshift=5mm] {\Large $f_{9}$} (14.center);
\end{scope}

\begin{scope}[shift={(32cm,-8cm)}]
		\node  (0) at (-4, 1) {};
		\node  (1) at (-5, 2.75) {};
		\node (2) at (-6.25, 1.75) {};
		\node (3) at (-6.5, 0.75) {};
		\node  (4) at (-5.5, 0.25) {};
		\node  (5) at (-6.25, -1.5) {};
		\node (6) at (-5.25, -2) {};
		\node (7) at (-4, -1.5) {};
		\node  (8) at (-2.25, -2.5) {};
		\node (9) at (-1.5, -1.25) {};
		\node  (10) at (-1.5, -0.5) {};
		\node (11) at (-2.5, 0.75) {};
        \node (12) at (-4, -3) {\large $T_{end}$};

		\draw[line width=1] (1.center) to (0.center);
		\draw[line width=1] (1.center) to (2.center);
		\draw[line width=1] (2.center) to (3.center);
		\draw[line width=1] (3.center) to (4.center);
		\draw[line width=1] (4.center) to (5.center);
		\draw[line width=1] (5.center) to (6.center);
		\draw[line width=1] (6.center) to (7.center);
		\draw[line width=1] (7.center) to (8.center);
		\draw[line width=1] (8.center) to (9.center);
		\draw[line width=1] (9.center) to (10.center);
		\draw[line width=1] (10.center) to (11.center);
		\draw[line width=1] (11.center) to (0.center);
		\draw[line width=1] (0.center) to (3.center);
		\draw[line width=1] (0.center) to (2.center);
		\draw[line width=1] (3.center) to (11.center);
		\draw[line width=1] (4.center) to (6.center);
		\draw[line width=1] (4.center) to (7.center);
		\draw[line width=1] (4.center) to (11.center);
		\draw[line width=1] (7.center) to (9.center);
		\draw[line width=1] (7.center) to (11.center);
		\draw[line width=1] (9.center) to (11.center);
\end{scope}
\end{tikzpicture}
\end{center}
\caption{An example of \textsc{Flip Distance}. $F=\langle f_{1},...,f_{9}\rangle$ is a shortest valid sequence for $T_{start}$ and $T_{end}$. The red edge is the current edge to be flipped.}
\label{fig4}
\end{figure}
\subsection{Nondeterministic construction process}
Now we give a description of our nondeterministic construction process \textbf{NDTRV} (see Fig.~\ref{fig1}).  The construction is nondeterministic, that is, we suppose it always guesses the optimal choice correctly when running.  The actual deterministic algorithm enumerates all possible choices to simulate the nondeterministic actions (see Fig.~\ref{fig3}). Readers could refer to \cite{Jianer} as an example of nondeterministic algorithm.  We present this construction process in order to depict the idea behind our deterministic algorithm clearly and vividly.

Let $T_{start}$ be the initial triangulation, and $T_{end}$ be the objective triangulation.  Suppose that $F$ is a shortest valid sequence, that is, $F$ has the shortest length among all valid sequences.  Let $D_{F}$ be the DAG constructed after $F$ according to Definition~\ref{def1}.
\textbf{NDTRV} traverses  $D_{F}$ reversely, removes the vertices of $D_{F}$ in a topologically-sorted order and transforms $T_{start}$ into $T_{end}$.  Although $D_{F}$ is unknown, for further analysis, we assume that \textbf{NDTRV} can remove and copy nodes in $D_{F}$ so that it can construct an auxiliary undirected graph $G$ and a list $L$ during the traversal.
In later analysis we show that $G$ is a forest. Moreover, there is a bijection between flipping actions of \textbf{NDTRV} and nodes of $G$ while there is a bijection between moving actions of \textbf{NDTRV} and edges of $G$. Obviously $G$ and $L$ are unknown as well.  We just show that if a shortest valid sequence $F$ exists, then $D_{F}$ exists.  So do $G$ and $L$.  We can see $D_{F}$ and $G$ as conceptual or dummy graphs.  We construct $G$ instead of analysing a subgraph of $D_{F}$ because one moving action (see Section~\ref{action}) of \textbf{NDTRV} may correspond to one or more edges in $D_{F}$ (see Fig.~\ref{fig2}), while there is a one-to-one correspondence between moving actions and edges in $G$ .

At the beginning of an iteration, \textbf{NDTRV} picks a necessary edge $e=\varepsilon (f_{h})$ arbitrarily and nondeterministically guesses a walk $W$ to find the underlying edge of a source node $f_{s}$. Lemma~\ref{backbone} shows that there exists such a walk $W$ whose length is bounded by the length of a directed path $B$ from $f_{s}$ to $f_{h}$, and every edge $e'$ in $W$ is the underlying edge of some flip $f'$ on $B$. \textbf{NDTRV} uses $L$ to preserve a sequence of nodes $\Gamma=\langle f_{s}=v_{1},...,f_{h}=v_{\ell}\rangle$ on $B$, whose underlying edges are in $W$. Simultaneously \textbf{NDTRV} constructs a path $S$ by copying all nodes in $\Gamma$ as well as adding an undirected edge between the copy of $v_{i}$ and $v_{i+1}$ for $i=1,...,\ell$.  $S$ is defined as a \emph{searching path}.  The node $f_{h}$ is called a \emph{starting node}.  If a starting node is precisely a source node in $D_{F}$, the searching path consists only of the copy of this starting node.  When finding $\varepsilon (f_{s})$, \textbf{NDTRV} removes $f_{s}$ from $D_{F}$, flips $\varepsilon (f_{s})$ and moves back(backtracks) to the previous edge $\varepsilon (v_{2})$ of $\varepsilon (f_{s})$ in $W$. If $v_{2}$ becomes a source node of $D_{F}$, \textbf{NDTRV} removes $v_{2}$ from $D_{F}$, flips $\varepsilon (v_{2})$ and moves back to the previous edge $\varepsilon (v_{3})$.  \textbf{NDTRV} repeats the above operations until finding a node $v_{i}$ in $\Gamma$ which is not a source node in $D_{F}$.  Then \textbf{NDTRV} uses $v_{i}$ as a new starting node, and recursively guesses a walk nondeterministically from $\varepsilon (v_{i})$ to find another edge which is the underlying edge of a source node as above.  \textbf{NDTRV} performs these operations until the initial starting node $f_{h}$ becomes a source node in $D_{F}$.  Finally \textbf{NDTRV} removes $f_{h}$ and flips $\varepsilon (f_{h})$, terminating this iteration.  If $T_{current}$ is not equal to $T_{end}$, \textbf{NDTRV} picks a new necessary edge and starts a new iteration as above until $T_{start}$ is transformed into $T_{end}$.  We give the formal presentation of \textbf{NDTRV} in Fig.~\ref{fig1} and an example in Fig.~\ref{fig4} and Fig.~\ref{fig2}.

\subsection{Actions of the construction}   \label{action}
Our construction process contains two types of actions operating on triangulations. The edge which the algorithm is operating on is called \emph{the current edge}. The current triangulation is denoted by $T_{current}$.

\begin{quote}
(i) Move to one edge that shares a triangle with the current edge in $T_{current}$.  We formalize it as \\
\hspace*{3mm} $(move,e_{1}\mapsto e_{2})$, where $e_{1}$ is the current edge and $e_{2}$ shares a triangle with $e_{1}$.

(ii) Flip the current edge and move back to the previous edge of the  current edge in $W$.  We \\
\hspace*{4mm} formalize it as $(f,e_{4}\mapsto e_{3})$, where $f$ is the flip performed on the current edge, $e_{4}$ equals $\varphi (f)$ \\
\hspace*{4mm} and $e_{3}$ is the previous edge of $\varepsilon (f)$ in the current walk $W$.
\end{quote}

Since there are four edges that share a triangle with the current edge, there are at most four directions for an action of type (i).  However, there is only one choice for an action of type (ii).

\subsection{The sequence of actions} \label{section4.3}
The following theorem is the main theorem for the deterministic algorithm \textbf{FLIPDT}, which bounds the length of the sequence of actions by $2|V(D_{F})|$.
\begin{theorem}  \label{main}
There exists a sequence of actions of length at most $2|V(D_{F})|$ following which we can perform a sequence of flips $F'$ of length $|V(D_{F})|$, starting from a necessary edge in $T_{start}$, such that $F'$ is a topological sorting of $D_{F}$.
\end{theorem}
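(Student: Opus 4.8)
The plan is to follow the run of \textbf{NDTRV} on the (conceptual) DAG $D_{F}$ and to read off the two claimed quantities — the flip sequence $F'$ and the total number of actions — from the auxiliary graph $G$ together with the two bijections announced above. Throughout, $G$ records the ``track'' of the reverse traversal: I would put each flipping action (type (ii)) in correspondence with a node of $G$ and each moving action (type (i)) in correspondence with an edge of $G$, noting that the backtracking half of a type-(ii) action merely re-traverses an edge already present in the current searching path and so creates nothing new. The two facts to establish are (a) that the produced flip sequence $F'$ is a topological sorting of $D_{F}$ of length exactly $|V(D_{F})|$, and (b) that $G$ is a forest on $|V(D_{F})|$ vertices; the length bound then drops out of a one-line count.

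For (a) I would argue that every edge flipped by a type-(ii) action is, at the moment of flipping, the underlying edge of a \emph{source} node of the current (partially emptied) copy of $D_{F}$: by the description of \textbf{NDTRV}, $\varepsilon(f_{s})$ is flipped only after $f_{s}$ has become a source, and the subsequent backtracking flips $v_{2},v_{3},\dots$ each only once it too has become a source. Hence deleting the flipped nodes one at a time respects all arcs of $D_{F}$, so the order in which nodes leave $D_{F}$ is a topological sorting, and by Lemma~\ref{topo} the corresponding $F'$ is a valid sequence transforming $T_{start}$ into $T_{end}$. Since the outer loop runs until $T_{current}=T_{end}$, i.e. until $D_{F}$ is empty, every node is removed exactly once, so $|F'|=|V(D_{F})|$, and the first action begins at the first necessary edge picked, which is an edge of $T_{start}$. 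Lemma~\ref{backbone} supplies the walk along which the source is reached, guaranteeing that it stays among underlying edges of flips lying on a backbone directed path of $D_{F}$, which is exactly what permits the flips to be carried out in this reversed order.

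For (b), the heart of the argument, I would show that $G$ grows by attaching, at a single already-present vertex, a simple path all of whose other vertices are new, so that no cycle is ever created. A searching path $S$ is a path by construction (it copies a directed backbone $B$ of $D_{F}$ and joins consecutive copies), and the only vertex of $S$ that can already lie in $G$ is its endpoint, the current starting node $v_{i}$. The internal vertices and the new source of the next searching path all lie on a directed path of the \emph{current} $D_{F}$ ending at $v_{i}$, hence are unflipped strict ancestors of $v_{i}$; because $D_{F}$ is acyclic these ancestors are disjoint from the descendants of $v_{i}$, while the still-unflipped vertices of $G$ at this moment are exactly $v_{i}$ together with the chain of starting nodes built so far, which are descendants of $v_{i}$. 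Since every already-flipped node has been deleted from $D_{F}$ and therefore cannot appear on the new backbone, the newly copied vertices are genuinely fresh. Consequently each attachment keeps $G$ acyclic, so $G$ is a forest; moreover each outer iteration flips off the entire ancestor set of its starting necessary edge before terminating, so distinct iterations contribute the vertex-disjoint trees of $G$ and account for all $|V(D_{F})|$ vertices.

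Finally I would combine the counts. By the flipping--node bijection, $|V(G)|$ equals the number of type-(ii) actions, which by (a) is $|V(D_{F})|$; by the moving--edge bijection, the number of type-(i) actions equals $|E(G)|$. Since $G$ is a forest with $t\ge 1$ trees, $|E(G)|=|V(G)|-t\le |V(D_{F})|-1$, whence the total number of actions is $|V(D_{F})|+|E(G)|\le 2|V(D_{F})|-1<2|V(D_{F})|$. I expect the main obstacle to be step (b): proving that the internal vertices of every newly attached searching path are fresh — equivalently, that the reverse traversal never revisits a vertex except at the single shared endpoint — which is precisely where the acyclicity of $D_{F}$ and the deletion of flipped nodes must be invoked with care; everything else is bookkeeping layered on top of Lemmas~\ref{topo} and~\ref{backbone}.
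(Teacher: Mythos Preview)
Your proposal is correct and follows essentially the same approach as the paper: establish the bijections between type-(ii) actions and $V(G)$ and between type-(i) actions and $E(G)$, prove that $G$ is a forest on $|V(D_{F})|$ vertices (the paper's Lemma~\ref{forest}), and then count $|V(G)|+|E(G)|\le 2|V(D_{F})|$. Your argument for (b) --- that each new searching path attaches at the single starting node and its remaining vertices are fresh because they are strict ancestors in the acyclic $D_{F}$ of a node whose only unflipped $G$-companions are its descendants --- is a more explicit version of the paper's one-line ``a cycle in $G$ would yield a directed cycle in $D_{F}$'', but the underlying idea is the same.
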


In order to prove theorem \ref{main}, we need to introduce some lemmas.  We will give the proof for Theorem~\ref{main} at the end of Section~\ref{section4.3}.

Lemma~\ref{backbone} is one of the main structural results of Kanj and Xia \cite{Kanj2}. It is also the structural basis of our algorithm.

\begin{lemma} \label{backbone}  \cite{Kanj2}
Suppose that a sequence of flips $F^{-}$ is performed such that every time we flip an edge, we delete the corresponding source node in the DAG resulting from preceding deleting operations.  Let $f_{h}$ be a node in the remaining DAG such that $\varepsilon (f_{h})$ is an edge in the triangulation $T$ resulting from performing the sequence of flips $F^{-}$.  There is a source node $f_{s}$ in the remaining DAG satisfying:\\
(1) There is a walk $W$  in $T$ from $\varepsilon (f_{h})$ to $\varepsilon (f_{s})$.\\
(2) There is a directed path $B$ from $f_{s}$ to $f_{h}$ in the remaining DAG that we refer to as the \\
\hspace*{5mm} backbone of the DAG.\\
(3) The length of $W$ is at most that of $B$.\\
(4) Any edge in $W$ is the underlying edge of a flip in $B$, that is, $W=\langle \varepsilon(v_{1}),...,\varepsilon(v_{\ell})\rangle$, where\\
\hspace*{5mm}  $v_{1}=f_{s},...,v_{\ell}=f_{h}$ are nodes in $B$ and there is a directed path $B_{i}$ from $v_{i}$ to $v_{i+1}$ for\\
\hspace*{5mm}  $i=1,...,\ell-1$ such that $B_{i}\subset B$.
\end{lemma}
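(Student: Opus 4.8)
The plan is to analyze the nondeterministic process \textbf{NDTRV} directly and to split the argument into a correctness part, namely that the flips it performs constitute a topological sorting of $D_{F}$, and a counting part, namely that the total number of actions is at most $2|V(D_{F})|$. The auxiliary forest $G$ is the device that links the two: it records one node per flipping action and one edge per moving action, so that a structural fact about $G$ translates directly into a bound on the number of actions. Throughout I write $r=|V(D_{F})|$.

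For correctness I would argue, by induction on the number of nodes already deleted from $D_{F}$, that every flip performed by \textbf{NDTRV} removes a node that is a \emph{source} of the remaining DAG at that moment. The engine of both the base case and the inductive step is Lemma~\ref{backbone}: whenever \textbf{NDTRV} holds a necessary edge $\varepsilon(f_{h})$ in the current triangulation, the lemma produces a source $f_{s}$ together with a walk $W$ from $\varepsilon(f_{h})$ to $\varepsilon(f_{s})$ all of whose edges are underlying edges of backbone nodes, so the nondeterministic walk can actually reach a source and flip it, and the backtracking step only ever flips $v_{2},v_{3},\dots$ once they have themselves become sources. Since the initial necessary edge is $\varepsilon(f_{h})$ for some flip $f_{h}$ and every flip deletes a source, the deletion order is by definition a topological sorting of $D_{F}$; calling this order $F'$, Lemma~\ref{topo} yields $T_{start}\xrightarrow{F'}T_{end}$, and as each of the $r$ nodes is deleted exactly once we obtain $|F'|=r=|V(D_{F})|$.

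For the counting part I would exploit the two bijections. The node bijection gives $|V(G)|=r=|V(D_{F})|$ at once, since each of the $r$ flips is performed exactly once. The crux is the claim that $G$ is a \emph{forest}, which I would prove by induction on the searching paths in their order of creation. Each searching path $S$ is, by construction, a simple path formed by copying the nodes of a backbone segment $\Gamma=\langle v_{1},\dots,v_{\ell}\rangle$ and joining consecutive copies; a recursive search is launched only from a node $v_{i}$ that already lies in $G$, and its new searching path shares exactly that one node with the current $G$ while all its remaining copies are fresh. Attaching a path of fresh vertices to an existing forest at a single existing vertex adds equally many new vertices and new edges and cannot close a cycle, so the forest property is preserved; and since each new iteration begins from a freshly chosen necessary edge $\varepsilon(f_{h})$ whose copy is a new node, distinct iterations only create new components. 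Because $G$ is then a forest on $|V(D_{F})|$ vertices, $|E(G)|\le |V(G)|-1<|V(D_{F})|$, and translating back through the bijections the total number of actions is $|V(G)|+|E(G)|\le 2|V(D_{F})|-1<2|V(D_{F})|$, as claimed.

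The main obstacle is exactly the forest claim. The delicate point is to verify that a recursively launched searching path meets the already-constructed part of $G$ in precisely the one vertex $v_{i}$ and in no other vertex, since a second point of contact would add an edge between two already-connected copies and close a cycle. I expect this to follow from the design of \textbf{NDTRV}, which places only fresh copies on each new searching path apart from the designated attachment node, combined with Lemma~\ref{backbone}(4), which pins down exactly which backbone nodes a walk may visit and hence prevents an accidental coincidence of copies across different searching paths.
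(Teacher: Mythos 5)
Your proposal does not actually prove the statement at hand: the statement is Lemma~\ref{backbone} itself, the structural result of Kanj and Xia about $D_{F}$ (which the paper only cites from \cite{Kanj2} and never reproves). Your argument is instead a proof sketch of Lemma~\ref{correction}, Lemma~\ref{forest} and Theorem~\ref{main} --- the correctness of \textbf{NDTRV}, the forest property of $G$, and the $2|V(D_{F})|$ bound on the action sequence --- and it explicitly invokes Lemma~\ref{backbone} as its ``engine'' in both the base case and the inductive step. As a proof of Lemma~\ref{backbone} this is circular: nowhere do you construct the source node $f_{s}$, the walk $W$, or the backbone path $B$, nor do you verify conditions (3) and (4); you assume all of these exist in order to analyze \textbf{NDTRV} and count actions via the bijections with $V(G)$ and $E(G)$.

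A genuine proof would have to work inside the remaining DAG and the current triangulation $T$: starting from $f_{h}$, if $f_{h}$ is not a source one must exhibit an in-neighbor $f_{g}$ of $f_{h}$ and use Definition~\ref{def1} (either $\varphi(f_{g})=\varepsilon(f_{h})$ or $\varphi(f_{g})$ shares a triangle with $\varepsilon(f_{h})$ in the triangulation just before $f_{h}$) to conclude that $\varepsilon(f_{g})$ is an edge of $T$ sharing a triangle with $\varepsilon(f_{h})$ \emph{in $T$}, then iterate this descent until a source $f_{s}$ is reached, simultaneously building $W$ and $B$ and checking that $W$ gains at most one edge per backbone step (which yields (3)) and that every edge of $W$ is the underlying edge of a node on $B$ (which yields (4)). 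The delicate point --- entirely absent from your write-up --- is showing that the relevant underlying edges and triangle adjacencies, which Definition~\ref{def1} guarantees only in intermediate triangulations $T_{j-1}$ along the original sequence $F$, persist in the current triangulation $T$ obtained after the prefix $F^{-}$ of source deletions; this requires arguing that no flip performed in $F^{-}$ and no pending flip can have destroyed them, and it is the heart of Kanj and Xia's argument. Your closing paragraph even flags the forest claim of Lemma~\ref{forest} as ``the main obstacle,'' which confirms the proof is aimed at the wrong target: that obstacle concerns Theorem~\ref{main}, not the lemma you were asked to prove.
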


\vspace*{-5mm}

\begin{figure}[htbp]
\setbox4=\vbox{\hsize32pc \strut \begin{quote}
\vspace*{-5mm} \footnotesize
\textbf{\textbf{NDTRV}}($T_{start}$, $T_{end}$; $D_{F}$) \\
\hspace*{4mm} Input: the initial triangulation $T_{start}$ and objective triangulation $T_{end}$. \\

\hspace*{4mm} /*Assume that $F$ is a shortest sequence, $D_{F}$ is the corresponding DAG by Definition \ref{def1}*/\\
\hspace*{4mm} /*$G$ is an auxiliary undirected graph */ \\
\hspace*{4mm} /*$L$ is a list keeping track of searching paths for backtracking */ \\
\hspace*{4mm} /*$Q$ is a list preserving the sequence of nondeterministic actions */ \\
\hspace*{4mm} /* $T_{current}$ is the current triangulation*/ \\

\begin{adjustwidth}{4mm}{0mm}
\hspace*{2mm} a. Let $V(G)$ and $E(G)$ be empty sets, $L$ and $Q$ be empty lists;  \\
\hspace*{2mm} b. $T_{current}=T_{start}$;  \\
\hspace*{2mm} c. {\bf While} $T_{current}\neq T_{end}$ \textbf{do}\\
\hspace*{2mm} c.1.\hspace{2mm} Pick a necessary edge $e= \varepsilon (f_{h})$ in $T_{current}$ arbitrarily; \\
\hspace*{2mm} c.2.\hspace{2mm} Add a copy of $f_{h}$ to $G$; \\
\hspace*{2mm} c.3.\hspace{2mm} Append $f_{h}$ to $L$; \\
\hspace*{2mm} c.4.\hspace{2mm} \textbf{TrackTree}($T_{current}$, $e$, $D_{F}$, $G$, $L$, $Q$); \\

{\bf TrackTree}($T_{current}$, $\varepsilon (f_{h})$, $D_{F}$, $G$, $L$, $Q$)   \\
\hspace*{2mm} 1. Nondeterministically guess a walk in $T_{current}$ from $\varepsilon (f_{h})$ to find $\varepsilon (f_{s})$ according to \\
\hspace*{6mm} Lemma~\ref{backbone}, let $\Gamma=\langle f_{s}=v_{1},...,f_{h}=v_{\ell}\rangle$,  where $f_{s}$ is a source node in  $D_{F}$, be a sequence \\
\hspace*{6mm} of nodes on the backbone $B$ whose underlying  edges are in the walk $W$  such that $\varepsilon (v_{i})$ \\
\hspace*{6mm} and $\varepsilon (v_{i+1})$ are consecutive in $W$ for $i =1,...,\ell-1$; \\
\hspace*{2mm} 2. Add a copy of $v_{1}$,...,$v_{\ell-1}$ to $G$ respectively; \\
\hspace*{2mm} 3. Connect the copies of $v_{1}$,...,$v_{\ell}$ in $G$ into a path; \\
\hspace*{2mm} 4. Append $v_{\ell-1}$,...,$v_{1}$ to $L$; \hspace*{35mm} \\
\hspace*{2mm} 5. Append $(move, \varepsilon (v_{\ell})\mapsto \varepsilon (v_{\ell-1}))$,...,$(move, \varepsilon (v_{2})\mapsto \varepsilon (v_{1}))$ to $Q$;  \hspace*{2mm} \emph{/*record actions*/} \\
\hspace*{2mm} 6. Remove $f_{s}=v_{1}$ from $L$; \\
\hspace*{2mm} 7. Remove $f_{s}=v_{1}$ from $D_{F}$; \\
\hspace*{2mm} 8. Flip $\varepsilon (f_{s})$ in $T_{current}$ and move back to $\varepsilon (v_{2})$; \\
\hspace*{2mm} 9. Append ($f_{s}$, $\varphi (v_{1})\mapsto \varepsilon(v_{2}))$ to $Q$; \hspace*{6mm} \emph{/*record actions*/} \\
\hspace*{2mm}10. {\bf For} $i = 2$ \textbf{to} $\ell$ \textbf{do}\\
\hspace*{2mm}10.1\hspace{3mm}Nondeterministically guess if $v_{i}$ is a source node in $D_{F}$;  \\
\hspace*{2mm}10.2\hspace{3mm}{\bf If} $v_{i}$ is a source node of $D_{F}$ \textbf{then} \hspace{1mm} \emph{/*flip and move back*/}\\
\hspace*{2mm}10.2.1\hspace{3mm} Remove $v_{i}$ from $L$;\\
\hspace*{2mm}10.2.2\hspace{3mm} Remove $v_{i}$ from $D_{F}$; \\
\hspace*{2mm}10.2.3\hspace{3mm} Flip $\varepsilon (v_{i})$ in $T_{current}$ and move back to $\varepsilon (v_{i+1})$;\\
\hspace*{2mm}10.2.4\hspace{3mm} Append ($v_{i}$, $\varphi (v_{i})\mapsto \varepsilon(v_{i+1}))$ to $Q$; \\
\hspace*{2mm}10.3\hspace{3mm}{\bf Else} \hspace{15mm} \\
\hspace*{2mm}10.3.1\hspace{3mm} \textbf{TrackTree}($T_{current}$, $\varepsilon (v_{i})$, $D_{F}$, $G$, $L$, $Q$);
\end{adjustwidth}
\end{quote}

\vspace*{-6mm}
\strut}
$$\boxit{\box4}$$
\vspace*{-9mm}
\caption{Nondeterministic construction \textbf{NDTRV}} \label{fig1}
\end{figure}

\begin{lemma}  \label{correction}
\textbf{NDTRV} transforms $T_{start}$ into $T_{end}$ with the minimum number of flips and stops in polynomial time if it correctly guesses every moving and flipping action.
\end{lemma}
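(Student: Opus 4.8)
The plan is to split the statement into three assertions — that \textbf{NDTRV} performs only legal flips and halts with $T_{current}=T_{end}$ (correctness), that it executes exactly $|V(D_F)|$ flips, which is minimum since $F$ is a shortest valid sequence (optimality), and that it runs in polynomial time — and to derive all three from a single invariant: at every moment the set of flips already performed is precisely a set obtained from $D_F$ by repeatedly deleting source nodes, so the order in which \textbf{NDTRV} flips edges is always a prefix of some topological sorting of $D_F$. Granting this invariant, a completed run realizes a full topological sorting, and Lemma~\ref{topo} does the heavy lifting: the executed sequence is valid, hence every individual flip along the way is legal and the terminal triangulation equals $T_{end}$.

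To establish the invariant I would argue that \textbf{NDTRV} flips an edge only at steps~8 and~10.2.3 of \textbf{TrackTree}, and that in both cases the node being removed is a source node of the \emph{current} remaining DAG: at step~8 because $f_s=v_1$ is produced by Lemma~\ref{backbone} as a source node, and at step~10.2.3 because that branch is taken only after the (correct) guess at step~10.1 certifies $v_i$ as a source. Thus no non-source node is ever flipped, so the performed prefix always extends to a full topological sorting, and the legality of each flip then follows from Lemma~\ref{topo} applied to any such extension.

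Next I would prove termination, and that every node of $D_F$ is eventually removed, by strong induction on the number $N$ of nodes in the remaining DAG. First I must check that the hypotheses of Lemma~\ref{backbone} hold whenever \textbf{TrackTree} is invoked: if $T_{current}\neq T_{end}$ there is a necessary edge $e\in T_{current}\setminus T_{end}$, and taking $f_h$ to be the first not-yet-removed flip with $\varepsilon(f_h)=e$ guarantees $\varepsilon(f_h)\in T_{current}$ and that $f_h$ lies in the remaining DAG, exactly as Lemma~\ref{backbone} requires. Each call removes at least $f_s$ at step~7, so every recursive call at step~10.3.1 — including the call on $v_\ell=f_h$ itself when $f_h$ is still not a source at $i=\ell$ — operates on a strictly smaller remaining DAG; by the induction hypothesis such a call terminates and removes its own starting node. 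Since the \texttt{for} loop runs over finitely many indices, the call terminates and, by its last iteration $i=\ell$, the starting node $f_h$ has been flipped. The one point needing care here, and the step I expect to be the main obstacle, is verifying that the underlying edge $\varepsilon(v_i)$ is still present in $T_{current}$ at the instant \textbf{TrackTree}$(v_i)$ is entered, so that Lemma~\ref{backbone} can be re-applied one level down; this follows from the topological-prefix invariant, since an edge serving as the underlying edge of an un-flipped node persists until that node is flipped, but it is the place where the interaction between the walk $W$ of Lemma~\ref{backbone} and the flips performed during backtracking must be handled explicitly.

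Finally, for optimality and the time bound I would count. Because each flip deletes a distinct node and the outer \texttt{while} loop flips at least the node $f_h$ it picked, the loop runs at most $|V(D_F)|$ times and halts exactly when $D_F$ is empty; at that moment the executed flips form a complete topological sorting of $D_F$, so Lemma~\ref{topo} gives $T_{current}=T_{end}$, and the total flip count is $|V(D_F)|=|F|$, the flip distance, which is minimum since $F$ is shortest. For the running time, each \textbf{TrackTree} call removes at least one node, so there are at most $|V(D_F)|$ calls; within a call the guessed walk has length at most $|B|\le|V(D_F)|$ and the \texttt{for} loop runs at most $\ell\le|V(D_F)|$ times, while each elementary operation (a flip, a move to an edge sharing a triangle, a source test under the correct guess, or an update of $G$, $L$, $Q$) costs $\mathrm{poly}(n)$. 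As $|V(D_F)|=|F|$ is at most the diameter $O(n^2)$ of $G_T(\mathcal P)$, the whole run costs $\mathrm{poly}(n)$.
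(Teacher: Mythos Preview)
Your proposal is correct and follows essentially the same approach as the paper: both arguments rest on the invariant that \textbf{NDTRV} only ever flips the underlying edge of a current source node in the remaining $D_F$ (using Lemma~\ref{backbone} and the correctness of the guesses), so the executed flips form a topological sorting of $D_F$ and Lemma~\ref{topo} yields $T_{end}$ in $|V(D_F)|$ flips, with the $O(n^2)$ diameter bound giving the polynomial running time. Your treatment is considerably more careful than the paper's---in particular the induction on the size of the remaining DAG for termination, and the check that $\varepsilon(v_i)$ is still present so Lemma~\ref{backbone} re-applies at recursive calls---but the underlying strategy is the same.
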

\begin{proof}
Suppose that $F$ is a shortest valid sequence.  According to Lemma \ref{backbone}, every edge flipped in \textbf{NDTRV} is the underlying edge of a source node in the remaining graph of $D_{F}$, and every node removed from the remaining graph of $D_{F}$ in \textbf{NDTRV} is a source node. If $T_{current}$ is equal to $T_{end}$ but $D_{F}$ is not empty, then there exists a valid sequence $F'$ which is shorter than $F$, contradicting that $F$ is a shortest valid sequence.  Thus \textbf{NDTRV} traverses $D_{F}$, removes all nodes of $D_{F}$ in a topologically-sorted order and transforms $T_{start}$ into $T_{end}$ with the minimum number of flips by Lemma \ref{topo}.  Since the diameter of a transformations graph $G_{T}(\cal P)$ is $O(n^{2})$ \cite{Lawson}, \textbf{NDTRV} stops in polynomial time.
\end{proof}

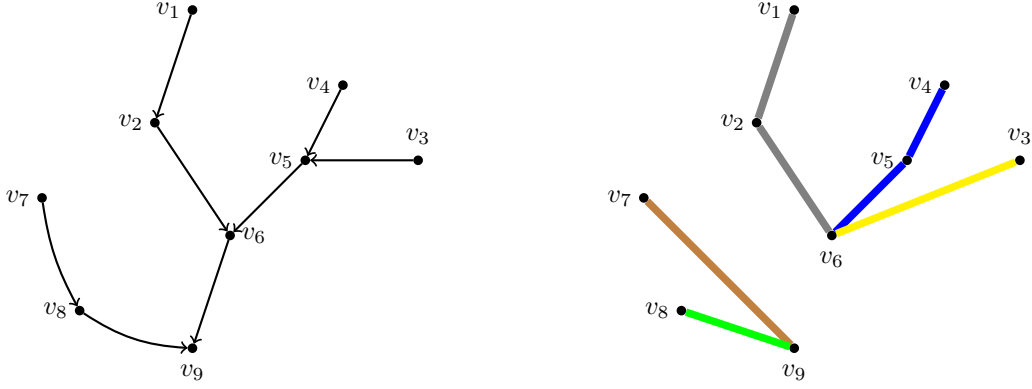
\begin{figure}[htbp]
\begin{center}
\begin{tikzpicture} [auto=left,every node/.style={fill=black,circle,draw,inner sep=1pt}, every path/.style={thick},scale=1.0]
\node [label=below:{$v_{9}$}](v9) at (0.5,-4) {};
\node [label=right:{$v_{6}$}](v6) at (1,-2.5) {};
\node [label=left:{$v_{2}$}](v2) at (0,-1) {};
\node [label=left:{$v_{1}$}](v1) at (0.5,0.5) {};
\node [label=left:{$v_{5}$}](v5) at (2,-1.5) {};
\node [label=left:{$v_{4}$}](v3) at (2.5,-0.5) {};
\node [label=above:{$v_{3}$}](v4) at (3.5,-1.5) {};
\node [label=left:{$v_{8}$}](v8) at (-1,-3.5) {};
\node [label=left:{$v_{7}$}](v7) at (-1.5,-2) {};
\draw[->] (v1) -- (v2);
\draw[->] (v2) -- (v6);
\draw[->] (v3) -- (v5);
\draw[->] (v4) -- (v5);
\draw[->] (v5) -- (v6);
\draw[->] (v6) -- (v9);
\draw[bend right=10,->] (v7) to (v8);
\draw[bend right=15,->] (v8) to (v9);

\begin{scope}[xshift=8cm]
\node [label=below:{$v_{9}$}](v9) at (0.5,-4) {};
\node [label=below:{$v_{6}$}](v6) at (1,-2.5) {};
\node [label=left:{$v_{2}$}](v2) at (0,-1) {};
\node [label=left:{$v_{1}$}](v1) at (0.5,0.5) {};
\node [label=left:{$v_{5}$}](v5) at (2,-1.5) {};
\node [label=left:{$v_{4}$}](v3) at (2.5,-0.5) {};
\node [label=above:{$v_{3}$}](v4) at (3.5,-1.5) {};
\node [label=left:{$v_{8}$}](v8) at (-1,-3.5) {};
\node [label=left:{$v_{7}$}](v7) at (-1.5,-2) {};
\draw[gray,solid,line width=1mm,fill=gray] (v1)  -- (v2) -- (v6);
\draw[blue,solid,line width=1mm,fill=blue] (v3) -- (v5) -- (v6);
\draw[yellow,solid,line width=1mm,fill=yellow]  (v6) -- (v4);
\draw[brown,solid,line width=1mm,fill=green] (v7) -- (v9);
\draw[green,solid,line width=1mm,fill=green] (v8) -- (v9);
\end{scope}
\end{tikzpicture}
\end{center}
\caption{Constructing $G$ according to the example in Fig.~\ref{fig4}. The graph on the left is $D_F$ after $F$ in Fig.~\ref{fig4}. The graph on the right is the auxiliary graph $G$ constructed by \textbf{NDTRV}. $\varepsilon (v_{6})$ is the necessary edge chosen at the beginning of the first iteration and $\varepsilon (v_{9})$ is the one chosen at the beginning of the second iteration. }
\label{fig2}
\end{figure}

\begin{lemma} \label{forest}
The auxiliary graph $G$ constructed during \textbf{NDTRV} is a forest and $|V(G)|=|V(D_{F})|$. Morover, $G$ consists of a set of vertex-disjoint trees called \emph{track trees}. Each track tree is created during an iteration of \textbf{NDTRV}.
\end{lemma}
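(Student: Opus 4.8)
The plan is to follow the construction of $G$ line by line in \textbf{NDTRV} and isolate two facts: first, that no node of $D_F$ is ever copied into $G$ more than once (which will yield the count $|V(G)|=|V(D_F)|$), and second, that each \emph{searching path} glued on in Steps~2--3 of \textbf{TrackTree} meets the part of $G$ already built in exactly one vertex (which will yield the forest property). Recall how vertices enter $G$: Step~c.2 of the main loop seeds a single copy of the top-level starting node $f_h$, and every invocation of \textbf{TrackTree} with backbone $\Gamma=\langle v_1,\dots,v_\ell\rangle$ adds fresh copies of $v_1,\dots,v_{\ell-1}$ (Step~2) and links the copies of $v_1,\dots,v_\ell$ into a path of $\ell-1$ edges (Step~3), where the copy of the endpoint $v_\ell$ is the one already present, since $v_\ell=f_h$ is the starting node supplied either by c.2 or by the parent call through Step~10.3.1.

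The crux is to prove that the interior backbone nodes $v_1,\dots,v_{\ell-1}$ of each call are genuinely new. I would establish the invariant: \emph{at the moment \textbf{TrackTree} is entered with starting node $f_h$, every node $u$ that has already been copied but not yet flipped satisfies either $u=f_h$ or there is a directed path from $f_h$ to $u$ in the current remaining DAG.} This is proved by induction on the depth of the recursion, using that \textbf{TrackTree} flips $v_1$ (Step~8) and then processes $v_2,\dots,v_\ell$ in increasing index order (Step~10). At the instant a child is spawned on $v_i$ via Step~10.3.1, the not-yet-flipped copied nodes contributed by the current frame are exactly $v_i,\dots,v_\ell$, and the sub-backbone orientation $v_i\to\cdots\to v_\ell$ supplies directed paths from the new starting node $f_h=v_i$ to each of them; composing the path $v_i\to\cdots\to v_\ell$ (whose endpoint is the parent's starting node) with the inductive hypothesis for the parent frame propagates the reachability condition to all the suspended ancestor frames on the call stack.

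With the invariant in hand, freshness follows from acyclicity of $D_F$. By Lemma~\ref{backbone} the backbone $B$ is a directed path from $f_s=v_1$ to $f_h=v_\ell$, so every interior node $v_j$ ($j<\ell$) admits a directed path to $f_h$ in the remaining DAG; in particular $v_j$ still belongs to that DAG and so cannot coincide with any already-flipped (hence deleted) node. Suppose instead $v_j$ equalled a copied but not-yet-flipped node $u$. If $u=f_h$, then the directed path from $v_j=f_h$ to $f_h$ has positive length, a cycle; if the invariant instead gives a directed path from $f_h$ to $u=v_j$, then concatenating it with the path from $v_j$ to $f_h$ again produces a cycle. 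Both contradict that $D_F$ is a DAG, so $v_1,\dots,v_{\ell-1}$ are new vertices and each searching path touches the previously built graph only at $v_\ell$.

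Finally I would assemble the three assertions. For the count, freshness shows the copies are pairwise distinct, and each copied node is subsequently flipped exactly once within the same iteration (Step~8 and Step~10.2.3, or recursively); since by Lemma~\ref{correction} \textbf{NDTRV} deletes every node of $D_F$ exactly once, the map sending a node to its unique copy is a bijection, giving $|V(G)|=|V(D_F)|$. For the forest property, attaching a path whose only old vertex is the endpoint $v_\ell$ adds $\ell-1$ vertices and $\ell-1$ edges and creates no cycle, because the new vertices reach the rest of $G$ only through $v_\ell$; starting from the single seed vertex of Step~c.2, an induction over the \textbf{TrackTree} calls of one iteration shows that the component built in that iteration is connected and acyclic, i.e.\ a tree, namely the \emph{track tree}. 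Distinct iterations copy disjoint node sets, since every node copied during an iteration is flipped (deleted from $D_F$) before that iteration ends and so cannot be re-encountered later; hence the track trees are vertex-disjoint and $G$ is exactly their disjoint union, a forest. The main obstacle is the invariant of the second paragraph: keeping precise track of which already-copied nodes are still live in $D_F$ and arguing that a freshly discovered backbone can never loop back onto them.
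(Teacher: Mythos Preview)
Your proposal is correct and follows the same core idea as the paper: both arguments reduce the forest claim to the acyclicity of $D_F$, showing that a repeated vertex (equivalently, a cycle) in $G$ would force a directed cycle in $D_F$ via the backbone paths of Lemma~\ref{backbone}. The paper's proof is far terser---it simply asserts that a cycle in $G$ yields a directed cycle in $D_F$ and that $|V(G)|=|V(D_F)|$ because each node is copied when it is removed---whereas you supply the details the paper glosses over (the invariant on copied-but-not-yet-flipped nodes reachable from the current starting node, and the single-attachment property of each searching path). This is added rigor, not a different route.
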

\begin{proof}
Since \textbf{NDTRV} makes a topological sorting during execution and the copy of a vertex of $D_{F}$ is added to $G$ only when it is removed from $D_{F}$, it follows that $|V(G)|=|V(D_{F})|$. Suppose that there is a cycle in $G$. According to lemma~\ref{backbone} and \textbf{NDTRV}, we can find a directed cycle in $D_{F}$, contradicting that $D_{F}$ is a directed acyclic graph. Thus $G$ is a forest. From the execution of \textbf{NDTRV}, we get that it creates a connected subgraph in $G$ during every iteration and the subgraphs created during each iteration are vertex-disjoint.  Thus the subgraph created during each iteration is a tree. This concludes the proof.
\end{proof}

We give the proof of Theorem~\ref{main} below.
\begin{proof}(Theorem~\ref{main})
During the procedure of \textbf{NDTRV}(Fig.~\ref{fig1}), it constructs a list $Q$ consisting of actions of type (i) and (ii).  We claim that $Q$ is exactly the sequence satisfying the requirement of this theorem. \textbf{NDTRV} appends an action of type (ii) to $Q$ if and only if it adds a vertex to $G$. Meanwhile, \textbf{NDTRV} appends an action of type (i) to $Q$ if and only if it adds an an edge to $G$.   It follows that there is a one-to-one correspondence between actions of type (i) in $Q$ and $E(G)$ and there is a one-to-one correspondence between actions of type (ii) in $Q$ and $V(G)$. According to Lemma~\ref{forest}, $G$ is a forest.  As a result, $|E(G)|\leq |V(G)|$, and the length of $Q$ is bounded by $|E(G)|+|V(G)|\leq 2|V(G)|=2|V(D_{F})|$.
\end{proof}

\subsection{The deterministic algorithm}
Now we are ready to give the deterministic algorithm \textbf{FLIPDT} for \textsc{Parameterized Flip Distance}. The specific algorithm is presented in Fig.~\ref{fig3}. As mentioned above, we assume that \textbf{NDTRV} is always able to guess the optimal choice correctly.  In fact, \textbf{FLIPDT} achieves this by trying all possible sequences of actions and partitions of $k$.  At the top level, \textbf{FLIPDT} branches into all partitions of $k$, namely $(k_{1},...,k_{t})$ satisfying $k_{1}+...+k_{t}=k$ and $k_{1},...,k_{t}\geq 1$, in which $k_{i}$ ($i= 1,...,t$) equals the size of the track tree $A_{i}$ constructed during the $i$-th iteration.

Suppose that \textbf{FLIPDT} is under some partition $(k_{1},...,k_{t})$.  Let $T_{iteration}^{0}=T_{start}$.  \textbf{FLIPDT} permutates all necessary edges in $T_{start}$ in the lexicographical order, and the ordering is denoted by $O_{lex}$.  Here we number the given points of $\cal P$ in the Euclidean plane from $1$ to $n$ arbitrarily and label one edge by a tuple consisting of two numbers of its endpoints. Thus we can order the edges lexicographically.  \textbf{FLIPDT} performs $t$ iterations.  At the beginning of the $i$-th iteration, $i = 1,...,t$, we denote the current triangulation by $T_{iteration}^{i-1}$.  For $i=1,...,t$, $T_{iteration}^{i}$ is also the triangulation resulting from the execution of the first $i$ iterations. At the beginning of the $i$-th iteration ($i=1,...,t$), \textbf{FLIPDT} repeatedly picks the next edge in $O_{lex}$ until finding a necessary edge $e$ belonging to $T_{iteration}^{i-1}$.  Note that one edge in $O_{lex}$ may not be a necessary edge anymore with respect to $T_{iteration}^{i-1}$.  Moreover, if \textbf{FLIPDT} reaches the end of $O_{lex}$ but does not find a necessary edge belonging to $T_{iteration}^{i-1}$, it needs to update $O_{lex}$ by clearing $O_{lex}$ and permutating all necessary edges in $T_{iteration}^{i-1}$ lexicographically, and choose the first edge in the updated ordering $O_{lex}$. Then \textbf{FLIPDT} branches into every possible sequence of actions $seq_{i}$ of length $2k_{i}-1$.

Under each enumeration of $seq_{i}$, \textbf{FLIPDT} branches into every possible sequence of actions $seq_{i+1}$ of length $2k_{i+1}-1$.  \textbf{FLIPDT} proceeds as above.  When \textbf{FLIPDT} finishes the last iteration, it judges if the resulting triangulation $T_{iteration}^{t}$ is equal to $T_{end}$.  If they are equal, the input instance is a yes-instance.  Otherwise, \textbf{FLIPDT} rejects this branch and proceeds.

Now we analyse how to enumerate all possible sequences of length $2k_{i}-1$. According to Lemma~\ref{forest} and Theorem~\ref{main}, in every iteration \textbf{NDTRV} constructs a track tree in which a node corresponds to an action of type (ii) while an edge corresponds to an action of type (i).  It follows that the number of actions of type (ii) is $k_{i}$, and the number of actions of type (i) is $k_{i}-1$.  According to \textbf{NDTRV}, the last action $\gamma$ in $seq_{i}$ must be of type (ii), and in any prefix of $seq_{i}-\gamma$ the number of actions of type (i) must not be less than that of type (ii).  Thus \textbf{FLIPDT} only needs to enumerate all sequences of length $2k_{i}-1$ satisfying the above constraints.

\begin{figure}[htb]
\setbox4=\vbox{\hsize32pc \strut \begin{quote}
\vspace*{-5mm} \footnotesize
{\bf FLIPDT}$(T_{start},T_{end}, k)$\\
\hspace*{2mm} Input: two triangulations $T_{start}$ and $T_{end}$ of a point set $\cal P$ in the Euclidean plane and an \\
\hspace*{12mm} integer $k$. \\
\hspace*{2mm} Output: return YES if there exists a sequence of flips of length $k$ that transforms $T_{start}$  \\
\hspace*{14mm} into $T_{end}$; otherwise return NO. \\

\hspace*{2mm} 1. {\bf For} each partition $(k_{1},...,k_{t})$ of $k$ satisfying $k_{1}+k_{2}+...+k_{t}=k$ and $k_{1},...,k_{t}\geq 1$ \textbf{do}\\
\hspace*{2mm} 1.1\hspace{5mm}Order all necessary edges in $T_{start}$ lexicographically and denote this ordering by $O_{lex}$;  \\
\hspace*{2mm} 1.2\hspace{5mm}\textbf{FDSearch}($T_{start}$,1,$(k_{1},...,k_{t})$); \emph{/*iteration $1$ distributed with $k_{1}$*/} \\
\hspace*{2mm} 2. \textbf{Return} NO; \\

{\bf FDSearch}($T$,$i$,$(k_{1},...,k_{t})$)  \hspace*{2mm} \emph{/*the concrete branching procedure*/ }\\
\hspace*{2mm} Input: a triangulation $T$, an integer $i$ denoting that the algorithm is at the $i$-th iteration and \\
\hspace*{12mm} a partition $(k_{1},...,k_{t})$ of $k$.\\
\hspace*{2mm} Output: return YES if the instance is accepted. \\

\hspace*{2mm} 1.\hspace{0mm} Repeatedly pick the next edge in $O_{lex}$ until finding a necessary edge $e$ with respect to $T$ \\
\hspace*{6mm} and $T_{end}$;\\
\hspace*{2mm} 2. \hspace{0mm}{\bf If} it reaches the end of $O_{lex}$ but finds no necessary edge in $T$ \textbf{then}\\
\hspace*{2mm} 2.1\hspace{3mm} Update $O_{lex}$ by permutating all necessary edges in $T_{iteration}^{i-1}$ in lexicographical order, \\
\hspace*{10mm} and pick the first edge $e$ in $O_{lex}$;  \\
\hspace*{2mm} 3.\hspace{0mm} {\bf For} each possible sequence of actions $seq_{i}$ of length $2k_{i}-1$ \textbf{do} \\
\hspace*{2mm} 3.1\hspace{3mm} $T'$ = \textbf{Transform}($T$,$seq_{i}$,$e$); \\
\hspace*{2mm} 3.2\hspace{3mm} {\bf If} $i<t$ \textbf{then}\hspace*{2mm} \emph{/*continue to the next iteration distributed with $k_{i+1}$*/} \\
\hspace*{2mm} 3.2.1\hspace{6mm} \textbf{FDSearch}($T'$,$i+1$,$(k_{1},...,k_{t})$);  \\
\hspace*{2mm} 3.3\hspace{3mm} {\bf Else if} $i=t$ and $T'=T_{end}$ \textbf{then} \hspace*{2mm} \emph{/*compare $T'$ with $T_{end}$*/}\\
\hspace*{2mm} 3.3.1\hspace{6mm} \textbf{Return} YES;  \\

{\bf Transform}($T$,$s$,$e$) \hspace*{2mm} \emph{/*subprocess for transforming triangulations*/ }\\
\hspace*{2mm} Input:  a triangulation $T$, a sequence of actions $s$ and a starting edges $e$.\\
\hspace*{2mm} Output:  a new triangulation $T'$. \\

\hspace*{2mm} 1. Perform a sequence of actions $s$ starting from $e$ in $T$, getting a new triangulation $T'$;  \\
\hspace*{2mm} 2. \textbf{Return} $T'$; \\
\end{quote}
\vspace*{-10mm}
\strut}
$$\boxit{\box4}$$
\vspace*{-9mm}
\caption{The deterministic algorithm for \textsc{Parameterized Flip Distance}} \label{fig3}
\end{figure}

The following theorem proves the correctness of the algorithm $\textbf{FLIPDT}$.

\begin{theorem}
Let $(T_{start}, T_{end}, k)$ be an input instance.  \textbf{FLIPDT} is correct and runs in time $O^{*}(k\cdot 32^{k})$.
\end{theorem}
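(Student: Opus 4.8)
The plan is to prove the two assertions—correctness and the $O^{*}(k\cdot 32^{k})$ running time—separately, and for correctness to relate each branch of \textbf{FLIPDT} to a run of the nondeterministic process \textbf{NDTRV}. For the forward direction I would assume the instance is a yes-instance, so the flip distance is $k$ and there is a shortest valid sequence $F$ with $|V(D_{F})|=k$. By Lemma~\ref{correction} the process \textbf{NDTRV}, guessing optimally, removes the nodes of $D_{F}$ in topological order and transforms $T_{start}$ into $T_{end}$; by Lemma~\ref{forest} the auxiliary graph $G$ it builds is a forest whose track trees $A_{1},\dots,A_{t}$ are created one per iteration and satisfy $\sum_{i}|V(A_{i})|=|V(G)|=k$. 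Setting $k_{i}=|V(A_{i})|$ yields a composition $(k_{1},\dots,k_{t})$ of $k$ that \textbf{FLIPDT} enumerates at its top level. By Theorem~\ref{main} and the bijections it records (type-(ii) actions $\leftrightarrow$ nodes, type-(i) actions $\leftrightarrow$ edges), the action list produced in iteration $i$ has exactly $k_{i}$ flips and $k_{i}-1$ moves, hence length $2k_{i}-1$, which is precisely the length over which \textbf{FDSearch} branches. I would then argue that, since \textbf{NDTRV} may pick its necessary edge arbitrarily, we may take it to coincide with the lexicographically-first necessary edge that \textbf{FDSearch} selects from $O_{lex}$ (the rebuilding of $O_{lex}$ costing only polynomial bookkeeping); consequently one branch of \textbf{FLIPDT} reproduces this correct run action-by-action and ends with $T_{iteration}^{t}=T_{end}$, so \textbf{FLIPDT} returns YES.

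For soundness I would follow any accepting branch: executing its recorded actions from $T_{start}$ realizes a sequence of genuine flips—one per type-(ii) action, hence $\sum_{i}k_{i}=k$ flips in total—ending at $T'=T_{end}$, which witnesses a valid flip sequence of length $k$ and thus flip distance at most $k$. Together with the shortest-sequence framing of the construction this gives the equivalence with the flip distance being exactly $k$.

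For the running time I would bound the number of leaves of the search tree as a product of a partition factor and a per-iteration branching factor. The number of compositions $(k_{1},\dots,k_{t})$ of $k$ is $\sum_{t}\binom{k-1}{t-1}=2^{k-1}$. Within iteration $i$, \textbf{FDSearch} enumerates action sequences of length $2k_{i}-1$: there are at most $2^{2k_{i}-1}$ ways to choose which positions are moves and which are flips, each move has at most $4$ directional choices and there are $k_{i}-1$ of them while each flip is forced, so the number of sequences is at most $2^{2k_{i}-1}\cdot 4^{k_{i}-1}\le 16^{k_{i}}$. Multiplying over iterations gives $\prod_{i}16^{k_{i}}=16^{k}$, independent of the composition, so the total number of leaves is at most $2^{k-1}\cdot 16^{k}=\tfrac12\,32^{k}$. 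Each leaf performs $O(k)$ flips and a comparison with $T_{end}$ in $\mathrm{poly}(n)$ time and the recursion depth is $O(k)$, so charging the polynomial factor to the $O^{*}$ and keeping the $O(k)$ overhead explicit yields $O^{*}(k\cdot 32^{k})$.

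The main obstacle I anticipate is the forward correctness direction, specifically verifying that the family of sequences \textbf{FDSearch} enumerates really contains the action list of the intended \textbf{NDTRV} run: this requires checking that the balance constraint on $seq_{i}$ (last action of type (ii), and every proper prefix having at least as many type-(i) as type-(ii) actions) is exactly the shape of a track-tree traversal, and that aligning \textbf{NDTRV}'s arbitrary edge choice with the lexicographic selection—including the case where $O_{lex}$ must be rebuilt mid-iteration—does not desynchronize the simulation. A secondary subtlety is pinning the constant to $32$: this relies on keeping the partition count ($2^{k-1}$) and the per-iteration bound ($16^{k_{i}}$) separate and on the crude shape/direction split above, whereas a blanket ``five choices per step'' would only give $25^{k}$ per partition and hence a weaker overall bound.
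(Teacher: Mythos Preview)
Your proposal is correct and follows essentially the same approach as the paper: the forward direction aligns \textbf{NDTRV}'s arbitrary necessary-edge choice with \textbf{FLIPDT}'s lexicographic one (the paper spells this out as an induction on the iteration index), and the running time is bounded via $2^{k-1}$ compositions times $16^{k_i}$ per iteration. The only cosmetic differences are that the paper bounds the number of interleavings by $\binom{2(k_i-1)}{k_i-1}$ with Stirling rather than your cruder $2^{2k_i-1}$, and closes with an explicit loop over $k'=0,\dots,k$ to handle the ``flip distance exactly $k$'' formulation.
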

\begin{proof}
Suppose that $(T_{start}, T_{end}, k)$ is a yes-instance.  There must exist a sequence of flips $F$ of length $k$ such that $T_{start}\xrightarrow{F} T_{end}$.  Thus $D_{F}$ exists according to Definition~\ref{def1}. By \textbf{NDTRV} and Lemma~\ref{forest}, there exists an undirected graph $G$ consisting of a set of vertex-disjoint track trees $A_{1},...,A_{t}$. Moreover, Theorem~\ref{main} shows that there exists a sequence of actions $Q$ following which we can perform all flips of $D_{F}$ in a topologically-sorted order. Due to \textbf{NDTRV}, $Q$ consists of several subsequences $seq_{1},...,seq_{t}$, in which $seq_{i}$ is constructed in the $i$-th iteration and corresponds to the track tree $A_{i}$ for $i=1,...,t$.  Supposing the size of $A_{i}$ is $\lambda_{i}$ for $i=1,...,t$ satisfying $\lambda_{1}+...+\lambda_{t}=k$, $seq_{i}$ contains $\lambda_{i}$ actions of type (ii) corresponding to the nodes of $A_{i}$ as well as $\lambda_{i}-1$ actions of type (i) corresponding to the edges of $A_{i}$. \textbf{FLIPDT} guesses the size of every track tree by enumerating all possible partitions of $k$ into $(k_{1},...,k_{t})$ such that $k_{1}+...+k_{t}=k$ and $k_{1},...,k_{t}\geq 1$.  We say that $k_{i}$ is distributed to the $i$-th iteration or the distribution for the $i$-th iteration is $k_{i}$ for $i=1,...,t$.

We claim that \textbf{FLIPDT} is able to perform a sequence $\Sigma$ of actions which correctly guesses every subsequence $seq_{1},...,seq_{t}$ of the objective sequence $Q$, that is, $\Sigma$ is a concatenation of $seq_{1},...,seq_{t}$.  Suppose that \textbf{FLIPDT} has completed $i$ iterations.  We prove this claim by induction on $i$.  At the first iteration, \textbf{FLIPDT} starts by picking the first necessary edge $e_{1}$ in list $O_{lex}$.  In the first iteration of constructing $Q$, \textbf{NDTRV} starts by picking an arbitrary necessary edge. Without loss of generality, it chooses $e_{1}$ and construct $seq_{1}$ starting from $e_{1}$.  The length of $seq_{1}$ is $2\lambda_{1}-1$.   Since \textbf{FLIPDT} tries every distribution in $\{1,...,k\}$ for the first iteration and $1\leq \lambda_{1}\leq k$, there is a correct guess of the distribution equal to $\lambda_{1}$ for this iteration.  Under this correct guess, \textbf{FLIPDT} tries all possible sequences of actions of length $2\lambda_{1}-1$ starting from $e_{1}$. It follows that \textbf{FLIPDT} is able to perform a sequence that is equals to $seq_{1}$ in the first iteration resulting in a triangulation $T_{1}$.

Suppose that the claim is true for any first $i$ iterations ($1\leq i<t$). That is, under some guess for the partition of $k$, $\lambda_{1},...,\lambda_{i}$ are distributed to the first $i$ iterations respectively.  Moreover, \textbf{FLIPDT} has completed $i$ iterations and performed a sequence of actions $seq_{concat,i}$, which is equal to the concatenation of $seq_{1},...,seq_{i}$, resulting in a triangulation $T_{i}$. Based on $T_{i}$ and $seq_{concat,i}$, \textbf{FLIPDT} is ready to perform the $(i+1)$-th iteration.  Suppose that \textbf{FLIPDT} picks $e_{i+1}$ from $O_{lex}$.  Let us see the construction of $Q$ in \textbf{NDTRV}. Suppose \textbf{NDTRV} has constructed the first $i$ track trees $A_{1},...,A_{i}$, and it is ready to begin a new iteration by arbitrarily picking a necessary edge in the current triangulation.  Since \textbf{FLIPDT} correctly guessed and performed the first $i$ subsequences of $Q$, $T_{i}$ is exactly equal to the current triangulation in \textbf{NDTRV}.  Thus $e_{i+1}$ is a candidate edge belonging to the set of all selectable necessary edges for \textbf{NDTRV} in this iteration.  Without loss of generality, it chooses $e_{i+1}$ and constructs $seq_{i+1}$ of length $2\lambda_{i+1}-1$ starting from $e_{i+1}$.  Since the sizes of $A_{1},...,A_{i}$ are $\lambda_{1},...,\lambda_{i}$ respectively, we get that $1\leq \lambda_{i+1}\leq k-(\lambda_{1}+...+\lambda_{i})$.  We argue that \textbf{FLIPDT} is able to perform a sequence that is equal to the concatenation of $seq_{1},...,seq_{i+1}$. Since the edges in $O_{lex}$ are ordered lexicographically and \textbf{FLIPDT} chooses necessary edges in a fixed manner, \textbf{FLIPDT} is sure to choose $e_{i+1}$ to begin the $(i+1)$-th iteration for every guessed sequence in which the first $i$ subsequences are equal to $seq_{1}$,...,$seq_{i}$ respectively.  Thus \textbf{FLIPDT} actually tries every distribution in $\{1,...,k-(\lambda_{1}+...+\lambda_{i}\})$ for the $(i+1)$-th iteration starting from $e_{i+1}$ based on $T_{i}$ and $seq_{concat,i}$.  It follows that there is a correct guess of distribution for the $(i+1)$-th iteration which is equal to $\lambda_{i+1}$.  Under this correct guess of distribution, \textbf{FLIPDT} tries all possible sequences of length $2\lambda_{i+1}-1$ starting from $e_{i+1}$ on $T_{i}$ based on $seq_{concat,i}$, ensuring that one of them is equal to $seq_{i+1}$.  It follows that the claim is true for the first $i+1$ iterations. This completes the inductive proof for the claim.

If $(T_{start},T_{end},k)$ is a yes-instance, the action sequence $Q$ of length at most $2k$ exists and the deterministic algorithm can find such a sequence.  Otherwise, there is no valid sequence $F$ of length $k$.  Thus there is no such action sequence $Q$.  As a result, \textbf{FLIPDT} returns NO.  It is proved that \textbf{FLIPDT} decides the given instance $(T_{start},T_{end},k)$ correctly.

Finding and ordering all necessary edges in $T_{start}$ takes $O(n+k\log k)$ time, and \textbf{FLIPDT} may update the ordering $O_{lex}$ at the beginning of each iteration. The number of partitions of $k$ is known as the composition number of $k$, which is $2^{k-1}$.  Under each partition $(k_{1},...,k_{t})$ of $k$ and for each $k_{i}$, $i=1,...,t$, we enumerate all possible subsequences of actions in which there are $k_{i}$ actions of type (ii) and $k_{i}-1$ actions of type (i).  It follows that the number of all possible subsequences is bounded by  $\binom{2(k_{i}-1)}{k_{i}-1}\times 4^{k_{i}-1}=O^{*}(16^{k_{i}})$ since there are four choices for action (i) and one choice for action (ii).  Here we use Stirling's approximation $n!\approx \sqrt{2\pi n}(n/e)^{n}$ and get that $\binom{2(k_{i}-1)}{k_{i}-1}=O^{*}(4^{k_{i}})$.  It follows that there are $O^{*}(16^{k_{1}})\times O^{*}(16^{k_{2}})\times...\times O^{*}(16^{k_{t}})=O^{*}(16^{k})$ cases under each partition.  Since for each case we can perform the sequence of actions in $O(k)$ time, and the resulting triangulation can be compared to $T_{end}$ in $O(k)$ time, the running time of the whole algorithm is bounded by $O^{*}(k\cdot 2^{k-1}\cdot (n+k\log k)+k\cdot 2^{k-1}\cdot 16^{k})=O^{*}(k\cdot 32^{k})$.

According to the definition of \textsc{Parameterized Flip Distance}, we need to check if we can find a shorter valid sequence for the given triangulations $T_{start}$ and $T_{end}$.  This is achieved by calling \textbf{FLIPDT} on each instance $(T_{start}, T_{end}, k')$ for $k'=0,...,k$.  The running time is bounded by $\sum_{k'=0}^k O^{*}(k'\cdot 32^{k'})=O^{*}(k\cdot 32^{k})$.
\end{proof}
\section{Conclusion}
In this paper we presented an FPT algorithm running in time $O^{*}(k\cdot 32^{k})$ for \textsc{Parameterized Flip Distance}, improving the previous $O^{*}(k\cdot c^{k})$-time ($c\leq 2\times 14^{11}$) FPT algorithm by Kanj and Xia \cite{Kanj}. An important related problem is computing the flip distance between triangulations of a convex polygon, whose traditional complexity is still unknown. Although our algorithm can be applied to the case of convex polygon, it seems that an $O(c^{k})$ algorithm with smaller $c$ for this case probably exists due to its more restrictive geometric property. In addition, whether there exists a polynomial kernel for \textsc{Parameterized Flip Distance} is also an interesting problem.

\appendix
\bibliography{flipdistance}

\end{document}